\newtheorem{observation}{Observation}
\newtheorem{proposition}{Proposition}
\newtheorem{lemma}{Lemma}
\newtheorem{remark}{Remark}
\newtheorem{corollary}{Corollary}
\newtheorem{definition}{Definition}
\title{Searcher Competition in Block Building\thanks{Christoph Schlegel and Danning Sui were working at Flashbots at the time of writing this paper. There is no any financial or other commercial interest into publishing this paper.}}
\author[1]{Akaki Mamageishvili}
\author[2]{Christoph Schlegel}
\author[3]{Benny Sudakov}
\author[2]{Danning Sui}
\affil[1]{Offchain Labs}
\affil[2]{Flashbots}
\affil[3]{ETH Z\"{u}rich}
\date{September, 2025}
\begin{document}

\maketitle

\begin{abstract}
We study the amount of maximal extractable value (MEV) captured by validators, as a function of searcher competition, in blockchains with competitive block building markets such as Ethereum. We argue that the core is a suitable solution concept in this context that makes robust predictions that are independent of implementation details or specific mechanisms chosen.
We characterize how much value validators extract in the core and quantify the surplus share of validators as a function of searcher competition. Searchers can obtain at most the marginal value increase of the winning block relative to the best block that can be built without their bundles. Dually this gives a lower bound on the value extracted by the validator. If arbitrages are easy to find and many searchers find similar bundles, the validator gets paid all value almost surely, while searchers can capture most value if there is little searcher competition per arbitrage.
Moreover, mechanisms that implement core allocations in dominant strategies, for submodular values, there is a unique dominant-strategy incentive compatible core-selecting mechanism that gives each searcher exactly their marginal value contribution to the winning block. We extend our model to multiple concurrent proposers in which, under mild assumptions, the core is empty.  

We validate our theoretical prediction empirically with aggregate bundle data and find a significant positive relation between the number of submitted backruns for the same opportunity and the median value captured by the proposer from the opportunity.    
\end{abstract}

\section{Introduction}

Blockchains that support smart contracts frequently run decentralized finance (DeFi) applications. This in turn gives rise to the phenomenon of miner/maximal extractable value,~\cite{mev_original}:
Blockchain protocols give {\it validators}, sometimes called proposers, the right to order transactions for a particular block. This effectively means that validators have a local monopoly on including or excluding transactions or ordering transactions in a particular way, in order to generate value for themselves from this privileged position. 
However, since extracting value from transaction ordering in an optimal way is a difficult task, more specialized actors, such as {\it block builders} and (arbitrage) {\it searchers} participate in the value extraction process in smart contract blockchains such as Ethereum. Block builders aggregate different arbitrage opportunities, liquidations or "sandwiches" in one block, using transactions from the public mempool, different private mempools, order flow auctions, or other transaction sources together with their own transactions. Then, they bid against other builders, to get their block published in a canonical chain of blocks. The bid is paid to the current block proposer. 
Arbitrage opportunities are typically found by more specialized players, {\it searchers} and passed to the builders.  See Figure~\ref{BB}.
\begin{figure}\includegraphics[scale=0.22]{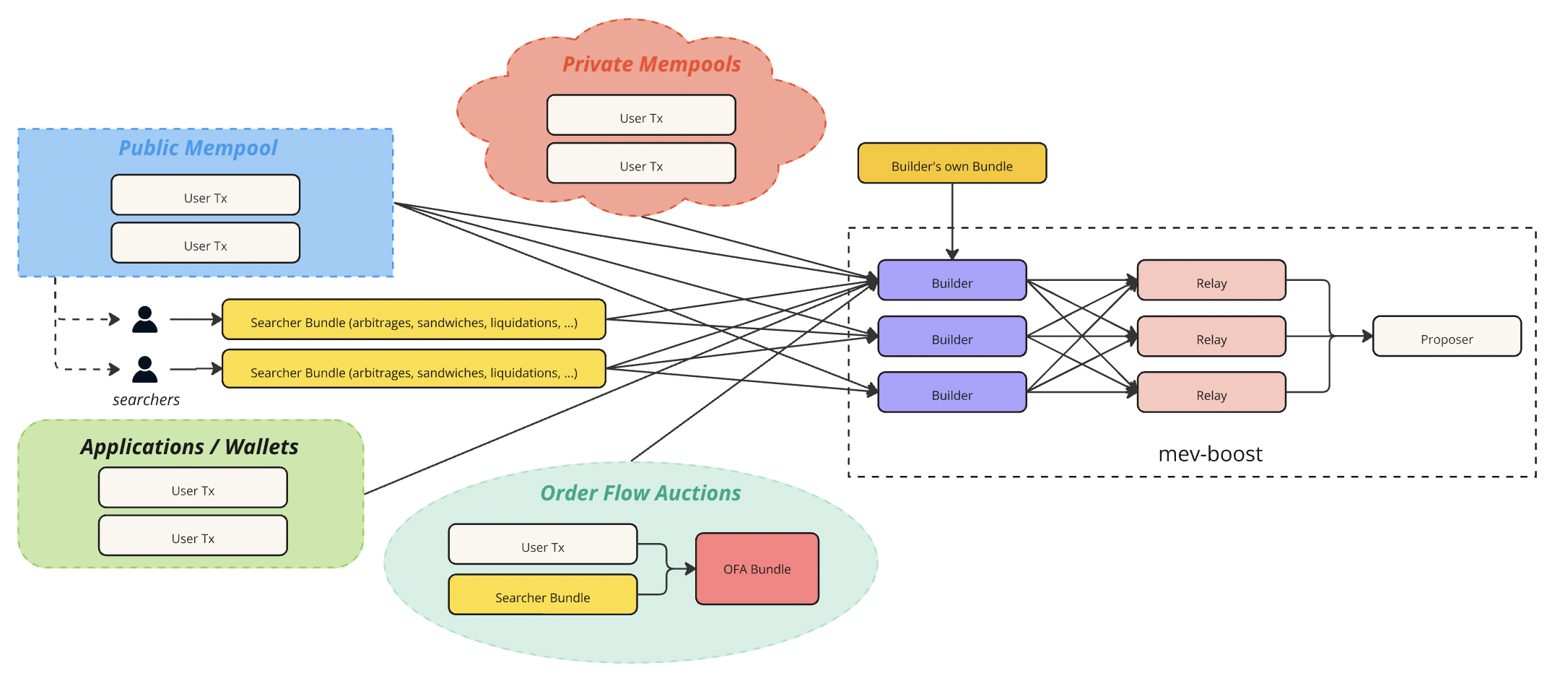}
\label{BB}
\caption{A schematic representation of the Ethereum block building process.}
\end{figure} for a schematic diagram of the Ethereum block building process. The resulting strategic interaction between searchers, builders and the validator lead to a value distribution of the arbitrage gains between these players. Empirical evidence shows that the majority of the total observable MEV is captured by validators.~\footnote{See \url{https://www.galaxy.com/insights/perspectives/distribution-of-mev-surplus/}.}

In this paper, we analyze the competition between searchers in order to explain the value distribution between the proposer (or block builder) and the searchers. Alternatively, our model can also be applied to the competition of different order flow providers and explain the value distribution between different order flow providers and the proposer (or block builder). In our model, different searchers may capture different value from different opportunities and may or may not find unique opportunities that are not found by competing searchers. This competition and specialization should, in turn, explain the value capture by different searchers in the MEV supply chain. Our model abstracts away from the intermediate layer of block building where block builders aggregate searcher bundles to blocks and compete in a bidding procedure. 
However, the ability for different combinations of searchers or order flow providers to jointly generate different blocks is captured implicitly by the solution concept we focus on: We consider core allocation where different combinations of players can construct blocks of different values and distribute the value among themselves. The core constraints require that no coalition of players should be able to generate more value for themselves (produce a more valuable block) than they get paid in the realized allocation. Thus, while we generically talk about searchers, our model can, for example, also capture the case where some searchers act as block builders.
The core is a natural solution concept in the context of block building and MEV capture as it captures the competition and possibility of collaboration of different players while abstracting away from any particular mechanisms that would intermediate between searchers, builders and proposers. Thus, it provides robust predictions on the set of plausible value distributions that can arise in any sufficiently competitive block building market.\footnote{The ability to make predictions about value distributions could be particularly useful in situations where profits are not directly observed. It has, for example, been observed that block builders with access to exclusive order flow tend to capture more value in the block building market. However, it cannot necessarily be inferred from on chain data how much of the profit the builder actually passes on to the order flow provider, since side payments might not be directly observable. Our model would generally predict that an order flow provider with unique flow should capture most of the value they contribute.} 
As a next step, one can look into mechanisms that implement particular core allocations by eliciting privately held information of searchers about the value they can generate from different blocks. 
All values in the model are assumed to be calculated after paying chain (gas) fees. One extra way to have heterogeneous values across searchers is their ability to reduce (gas) fees they pay. 

 Our first result shows that searchers can at most obtain their marginal contribution to the winning block (the difference in value between the best block that can be built with their transactions and without their transactions). If the value is submodular (value in the set extension has diminishing returns), then giving each searcher their marginal contribution is in the core.\footnote{While transactions submitted to a builder might exhibit significant complementarities, submodularity can be argued to be a reasonable assumption in a sufficiently consolidated market. Order flow providers with complementary flow have an incentive to integrate to capture more value together. Thus, the realizable value might be submodular in the contributions of the different (consolidated) players.}
A straightforward calculation shows that in a world with passive block producers, this particular core allocation coincides with the Vickrey-Clark-Growes (VCG) outcome (\cite{AGT}) and hence can be implemented in dominant strategies (Corollary~\ref{implementation}). On the other hand (Proposition~\ref{implementation2}), no other core-selecting mechanism is dominant-strategy incentive-compatible. This result gives additional justification for further study of the searcher-optimal point in the core: the value obtained by the validator in this point can be interpreted as the maximal extractable value he can obtain taking into account information rents captured by the searchers.

As a next step, we consider a stochastic model of searcher competition to study the value distribution between the searchers and the validator. Different opportunities (arbitrage or MEV) have a fixed probability of being found by a searcher. The more searchers find an opportunity, the more value can be extracted by the validator and the less value can be extracted by the searchers. We show that if the probability of finding an opportunity is bounded from below by $p>\log(n)/n$, where $n$ is the number of searchers actively searching in the strategy, with high probability the validator captures all value in all core allocations. On the other hand, if the probability of sucessfully searching is very low, $p\in\Theta(1/n)$, then with positive constant probability, searchers will be able to capture all value in the searcher-optimal core allocation.

Next, we empirically validate our stochastic model. We obtain bundle data from the MEV-Share Order Flow Auction (OFA) that allows to study the relation between searcher competition and validator profit. We find that there is significant positive relation between the number of bundles that backrun the same user transaction and the median profit of the validator from opportunities with the same number of backruns. 

We extend our model to multiple concurrent proposers (MCP) setting, a recent proposal for block building, which improves economic censorship resistance of the chain, check out~\cite{MCP} and discussion within. In this setting, there are multiple different entities that bring their sets of transactions or bundles, from which the block is formed. We again abstract away from the exact rules how the block is formed from the inputs of proposers. As long as there are at least two searchers, we show that if all but one proposer can include all searcher-relevant transactions in the block, the core of the underlying game is empty. The assumption is reasonable if there are sufficiently many proposers and searcher bundles do not require too much space in the block. This result hinges on the economic instability of multiple concurrent proposers.
However, if there is only one searcher, there is a trivial core where all the value goes to the only searcher as the proposers substitute each one of them perfectly.

\subsection{Glossary of Terms}\label{sec:glossary}

In this section, we will list terms that are used throughout the introduction and later in the paper:

\begin{enumerate}
    \item {\bf Miner, Validator, Proposer:} A designated party that builds the next block of the blockchain. Initially, the term {\it miner} was used, referring to the fact that the next block builder was supposed to find a nonce resulting into a suitable hash of its own block, an activity called "mining" in the Proof-of-Work context of Bitcoin,~\cite{bitcoin}. Later, in the Proof-of-Stake context, the role of a block builder was changed to the term {\it validator}, also internalizing the functions of validating and voting for the block proposed by the block-proposing entity. In the later evolving block building market of the Ethereum chain, validators usually act as a {\it proposer} that signs a block without actually building it, while the parties that actually build blocks are called {\it builders}. 
    \item{\bf Arbitrage:} Opportunity in which risk-free profits can be made. Usually, arbitrages arise when the same asset is priced differently on multiple markets. In the context of blockchains, the prices between centralized exchange and on-chain exchanges differ, creating arbitrage opportunities in which the arbitrageur sells (buys) over (under) priced asset on chain and buys (sells) under (over) priced asset on the centralized exchange. Different arbitrageurs, however, have different portfolios both on-chain and centralized exchanges, which results into different values they can extract.
    \item {\bf Searcher:} Party specialized in finding arbitrage opportunities. There are searchers that find many different arbitrage types, also searchers that specialize in specific arbitrage types, introducing heterogeneity among them.
    \item {\bf Frontrunning:} Blockchains usually maintain a pool of pending transactions, before they are included and executed in the block(s). If the pool of pending transactions is public, that allows searchers (or proposers/builders) to take advantage by inserting their own transaction in front of a publicly available transaction in the block for execution, leading to a profit.
    \item {\bf Backrunning:} Similar to frontrunning, backrunning refers to inserting the own transaction after a public transaction in the block for execution, leading to a profit for a backrunning party. Backrunning is usually considered a harmless activity, usually fixing the price movement by noise traders, unlike frontrunning.
    \item {\bf Sandwich attack:} Refers to putting the own transactions in front and after a publicly available transaction in the block for execution. This is done by bundling all 3 transactions into one transaction (bundle). Typically, the victim transaction performs an exchange on-chain. The transaction inserted in front increases the price of the asset that the victim transaction buys, and then the following transaction reverts the trade direction. This results in a higher price slippage for a victim transaction sender and gains for an attacker.
    \item {\bf Liquidation:} In many decentralized finance applications, collateral is provided, for example, to borrow some assets. When a collateral asset price goes down, it is usually sold at a disadvantaged price to a depositor, creating a value for a potential buyer. Similarly to arbitrage opportunities, liquidation brings different value to different searchers depending their portfolios.
\end{enumerate}

\subsection{Related Literature}
The phenomenon of miner extractable value has first been documented in~\cite{mev_original}. Another early contribution on front-running in decentralized finance is~\cite{eskandarisok}. MEV has been documented by a variety of public dashboards and data sets~\footnote{For the examples see \url{https://mevboost.pics/},\url{https://libmev.com/}}. We refer to~\cite{flash-bot} for the discussion on searchers, builders and validators in MEV extraction, for the Proof-of-Work setting.

Recently, also the magnitude of non-atomic arbitrage has been empirically investigated, where searchers realize one lag of a trade on chain and one on a different domain, see e.g.~\cite{heimbach2024non}~\footnote{Also check out  \url{https://dune.com/flashbots/lol}.}.

The block building market structure that has evolved in Ethereum since the change to proof of stake, has been the topic of several recent contributions: The dashboard~\url{https://orderflow.art/} documents empirically the supply chain through which transaction requests land on chain. 
\cite{MEV_centralization} studies whether MEV and Proof-of-Stake rewards capture leads to centralization, discussing both validator and builder roles in the market formation. 
\cite{pai2023structural} argue that in the current market structure searchers and builders have an incentive to vertically integrate.~\cite{capponi2024proposer} argue that the builder market is prone to centralization.~\cite{decentralization} provide descriptive statistics on the level of decentralization on the builder landscape.~\cite{who_gets_MEV} proposes a dynamic MEV sharing mechanism that the authors argue results into better decentralization and fair allocation. In our paper, we focus on neither the fairness of the value distribution nor long-term centralization threats due to asymmetric value distribution.

\cite{bahrani2024} studies questions of implementation with active block producers. Our model is similar to theirs, but we focus on the question of implied value distribution. Our positive results for dominant-strategy incentive compatible core-selecting mechanisms for passive block builders complement their negative result for active block builders. \cite{bahrani2024} also study the role of searchers as intermediaries.

The notion of core was first formally defined in~\cite{core_original}. For an overview of results around the core and submodularity, see, e.g.~\cite{Moulin}. Similar considerations as in our paper appear in the literature on core-selecting auctions~\cite{day2007fair,day2008core}.

\section{Examples and Model}

In this section, we define the game and cooperative game theory terms formally. Before that, we will discuss three illustrative examples to give a motivation and intuition.

\subsection{Examples}

Suppose that there are three searchers, labeled $s_1$, $s_2$ and $s_3$ and three opportunities, labeled $o_1$, $o_2$ and $o_3$. Consider the following three matrices of values searchers derive for each opportunity. It is assumed that opportunities derive their values independently from each other, and the block size is large enough to accommodate extraction of all opportunities.

\begin{center}
\begin{tabular}{ |c|c|c|c| } 
 \hline
S/O & $o_1$ & $o_2$ & $o_3$ \\
\hline

$s_1$ & $1$ & $0$ & $0$ \\
\hline
$s_2$ & $0$ & $1$ & $0$ \\
\hline
$s_3$ & $0$ & $0$ & $1$ \\
 \hline
\end{tabular}
\end{center}
In the example one, all searchers find unique opportunities.

\begin{center}
\begin{tabular}{ |c|c|c|c| } 
 \hline
S/O & $o_1$ & $o_2$ & $o_3$ \\
\hline

$s_1$ & $0$ & $1$ & $1$ \\
\hline
$s_2$ & $1$ & $0$ & $1$ \\
\hline
$s_3$ & $1$ & $1$ & $0$ \\
 \hline
\end{tabular}
\end{center}

In the example two, each opportunity that each searcher finds is found by some other searcher. 

\begin{center}
\begin{tabular}{ |c|c|c|c| } 
 \hline
S/O & $o_1$ & $o_2$ & $o_3$ \\
\hline

$s_1$ & $1$ & $1$ & $0$ \\
\hline
$s_2$ & $0$ & $1$ & $0$ \\
\hline
$s_3$ & $0$ & $0$ & $1$ \\
 \hline
\end{tabular}
\end{center}

In the example three, searcher $s_1$ finds one unique opportunity and one opportunity found by the searcher $s_2$, which only finds one opportunity, while the searcher $s_3$ finds one unique opportunity. 

All these examples differ in many aspects, namely, who finds how many opportunities and how unique these opportunities are. One thing that all examples share is that the total value of all opportunities is $3$. It is therefore interesting how this value can be distributed between searchers and the validator that has an exclusive right to propose the next block, taking into account their bargaining power searchers derive from uniquely finding opportunities. In the following section, we define tools to solve this simple problem with independent valuations and a much more general problem with correlated valuations. 

\subsection{Model}

There is a finite set $\mathcal{S}$ of searchers that submit transactions for inclusion in the block.  Similarly as in~\cite{bahrani2024} we will usually identify searchers with (bundles of) transactions they have sent for inclusion. However, our model also allows for the interpretation that the same searcher (address) sends multiple bundles for inclusion.  
There is one validator (proposer), denoted by $V$. For each set of searchers $A\subseteq \mathcal{S}$ there is a finite set of feasible blocks $\mathcal{B}(A)$ that can be built from bundles of transactions submitted by searchers in $A$. A searcher $i$ generates value $v_i(B)$ from a block $B$ and the validator generates value $v_V(B)$ from block $B$. Our model can capture externalities (searcher $i$'s realized value may not only depend on her included transactions but also other transactions in the realized block) and active validators/block producers (we may have $v_V(B)\neq 0$). We assume that utility is transferable and the final utility realized by searcher $i$ if block $B$ is realized and she makes a payment of $p_i$ (e.g. to the validator) is $v_i(B)-p_i$. 

Since utility is transferable, we can define a coalitional value function $v:2^{\mathcal{S}\cup\{V\}}\rightarrow \mathbb{R}_+$ by 

$$v(S\cup V):=\max_{B\in\mathcal{B}(S)}\left(\sum_{i\in S}v_i(B)+v_V(B)\right),$$
and 
$$v(S)=0 \text{ if }V\notin S,$$
i.e. in case the validator is part of the coalition, the coalitional value is the value of the total welfare maximizing block consisting of transactions from searchers in the coalition, and in case the validator is not part of the coalition, no value can be generated, as the validator is necessary to realize a block. This implicitly assumes a communication mechanism between a validator and searchers, however, the validator does not need to be actively involved into value extraction.

It will be useful subsequently to introduce the short hand notation $$\bar{v}(S):=v(S\cup V)$$
for $S\subseteq\mathcal{S}$ to denote the value that searchers $S$ can generate together with the validator. Observe that by construction, the (collective) value function is monotonic, $$\bar{v}(A)\leq \bar{v}(B)\quad\text{ for all }A\subseteq B\subseteq \mathcal{S};$$
if we receive more bundles to build a block that will increase welfare weakly, since we can always discard submitted bundles when building a block. 

In all three examples, $\bar{v}(\{s_1,s_2,s_3\}) = 3$. In example one, value function is defined as: $\bar{v}(\{s_i\})=1$, for any $i$ and $\bar{v}(\{s_i,s_j\})=2$, for any $i\neq j$. 

In example two, value function is defined as: $\bar{v}(\{s_i\})=2$, for any $i$ and $\bar{v}(\{s_i,s_j\})=3$, for any $i\neq j$.

In example three, value function is defined as: $\bar{v}(\{s_2\})=\bar{v}(s_3)=1$, $\bar{v}(s_1)=2$, $\bar{v}(\{s_1,s_2\})= \bar{v}(\{s_2,s_3\})=2$, and $\bar{v}(\{s_1,s3\})=3$.

We make the following assumption on the (collective) value function:\newline

\noindent {\bf Submodularity}: Let $A,B\subseteq \mathcal{S}$. Then
$$\bar{v}(A)+\bar{v}(B)\geq \bar{v}(A\cup B)+\bar{v}(A\cap B),$$

Submodularity states that the value of a block we can build from transactions submitted by searchers in $A$ and $B$, is bounded by subtracting the value of a block we can build from transactions in both $A$ and $B$ from the sum of values we can achieve from building a block with transactions in $A$ and a block with transactions in $B$.

Submodularity requires that there are not-too-strong complementarities between different submitted bundles. We can justify this assumption in two ways: first, it may be that complementarities are not strong and different MEV opportunities provide value that is mostly independent from other opportunities. Second, it may be that the complementarities are already absorbed by searchers, e.g. in the sense that searchers who provide complementary flow have an incentive to integrate their operations and send their flow together to extract more value. It is worth noting that our upper bound on searcher values holds also for non-submodular value functions, but may be loose in that case. Searchers conducting most common arbitrage opportunities described in section~\ref{sec:glossary} exhibit no strict complementarities. It can be easily checked that all the examples satisfy the submodularity condition.

It is easy to show that for monotone value functions, submodularity is equivalent to requiring decreasing marginal value:
\\\newline
\noindent {\bf Decreasing Marginal Value}: Let $A\subseteq B\subseteq \mathcal{S}$ and $a\in A$. Then, the value function satisfies the following inequality:
$$\bar{v}(B)-\bar{v}(B\setminus\{a\})\leq \bar{v}(A)-\bar{v}(A\setminus \{a\}).$$

A direct consequence of submodularity is the following lemma, which will be useful subsequently:
\begin{lemma}
Let ${A}\subseteq {B}\subseteq\mathcal{S}$. Then, for decreasing marginal value functions, the following inequality holds: 
$$\bar{v}({B})-\bar{v}({A})\geq \sum_{i\in B\setminus A}(\bar{v}({B})-\bar{v}({B}\setminus\{i\})).$$
\end{lemma}
\begin{proof}
We prove the result by induction on $N:=|{B}\setminus {A}|$. For $N=0$ the result holds trivially. Now suppose the result holds for $N\geq 0$ and consider the case  $N+1$. Let $j\in B\setminus A$. By induction assumption
$$\bar{v}({B}\setminus\{j\})-\bar{v}({A})\geq \sum_{i\in B\setminus (A\cup\{j\})}(\bar{v}({B}\setminus\{j\})-\bar{v}({B}\setminus\{i,j\})).$$
Adding $\bar{v}(B)-\bar{v}(B\setminus\{j\})$ on both sides and using submodularity (which implies decreasing marginal value), we obtain
\begin{align*}\bar{v}({B})-\bar{v}(A)&\geq \sum_{i\in B\setminus (A\cup\{j\})}(\bar{v}(B\setminus\{j\})-\bar{v}(B\setminus\{i,j\}))+\bar{v}(B)-\bar{v}(B\setminus\{j\})\\&\geq \sum_{i\in B\setminus A}(\bar{v}(B)-\bar{v}(B\setminus\{i\})).\end{align*}

\end{proof}

We define the allocation of value between all players of the cooperative game.

\begin{definition}
An {allocation} is a value distribution $x: \mathcal{S}\cup V\rightarrow \mathbb{R}_{+}$ such that
$$\sum_{j\in \mathcal{S}\cup V}x_j\leq v(\mathcal{S}\cup V).$$    
\end{definition}
Note the abuse of notation, we will use $x_j$ instead of $x(j)$ to denote an allocation to a player $j$.
Next, we define the core, which is the main solution concept that we will focus on in this paper:

\begin{definition}
An allocation is in the {core} if the following inequality 

\begin{equation}\label{core_inequality}
    \sum_{j\in C}x_j\geq v(C)
\end{equation} holds for any subset $C\subseteq  \mathcal{S}\cup V$
and all value is distributed:
$$\sum_{j\in V\cup \mathcal{S}}x_j=v(\mathcal{S}\cup V).$$    
\end{definition}

Inequalities~\eqref{core_inequality} indicate that there is no coalition of players $C$ that get allocated less than the value they can derive themselves, $v(C)$. Intuitively, if such a coalition of players existed, they could profitably deviate from the allocation, leading to instability of the allocation.

\begin{remark}
As usual in the formulation of the core, the solution proposes a value allocation without specifying an explicit implementation through a block and payments between the searchers and the validator. Let $B^*\in\mathcal{B}(\mathcal{S})$ be a welfare maximizing block, i.e.~$\sum_{i\in\mathcal{S}\cup V}v_i(B^*)=v(\mathcal{S}\cup V)$. Then a core value allocation $x$ can be implemented by realizing the block $B^*$ and requiring that each individual searcher $i$ makes a payment of $p_i=v_i(B^*)-x_i$ to the validator.
\end{remark}

Immediately from the requirement that without the validator no value can be realized, it follows that 
the validator getting all gains is in the core. 
\begin{observation}\label{obs1}
   The  core is always non-empty. The allocation where $x_V=v(\mathcal{S}\cup V)$ and $x_i=0$ for each $i\in\mathcal{S}$ is in the core.
\end{observation}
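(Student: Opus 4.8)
The plan is to verify the two claims in Observation~\ref{obs1} directly from the definitions. First, I would argue non-emptiness by exhibiting the specific allocation described: set $x_V = v(\mathcal{S}\cup V)$ and $x_i = 0$ for every searcher $i\in\mathcal{S}$. This is automatically an allocation, since $\sum_{j\in\mathcal{S}\cup V}x_j = v(\mathcal{S}\cup V)$, so the budget constraint holds with equality (which also discharges the "all value is distributed" requirement in the core definition).

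Next I would check the core inequality $\sum_{j\in C}x_j \geq v(C)$ for an arbitrary coalition $C\subseteq \mathcal{S}\cup V$. I would split into two cases according to whether the validator belongs to $C$. If $V\notin C$, then $v(C) = 0$ by the definition of the coalitional value function, and since $x$ is nonnegative we trivially have $\sum_{j\in C}x_j \geq 0 = v(C)$. If $V\in C$, then $\sum_{j\in C}x_j \geq x_V = v(\mathcal{S}\cup V)$, because all other terms $x_i$ are nonnegative; and by monotonicity of $v$ (or simply because $\mathcal{B}(C\setminus\{V\})\subseteq\mathcal{B}(\mathcal{S})$ when one can always discard bundles) we have $v(C) = \bar v(C\setminus\{V\}) \leq \bar v(\mathcal{S}) = v(\mathcal{S}\cup V)$, so the inequality follows.

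There is essentially no obstacle here — the statement is a warm-up observation and the only thing to be careful about is matching each requirement in the definition of the core (nonnegativity of the allocation, exact budget balance, and the coalition inequalities) against the proposed allocation, and invoking monotonicity of $\bar v$ for the case $V\in C$. I would present the argument in three short lines corresponding to: $x$ is a valid allocation with full distribution; coalitions without $V$ get value $0$; coalitions with $V$ are dominated by $x_V = v(\mathcal{S}\cup V)$ together with monotonicity.
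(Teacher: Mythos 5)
Your verification is correct and matches the paper's (very terse) justification, which simply notes that the claim follows immediately from the fact that coalitions without the validator generate zero value; you supply the additional routine detail that coalitions containing the validator are handled by monotonicity of $\bar v$. Nothing further is needed.
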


\section{Analysis}
Our first main result states each searcher can at most capture their marginal contribution to the realized block, and any allocation that gives each searcher at most their marginal contribution is in the core. In particular, this implies that there is a searcher optimal core allocation (giving each searcher exactly their marginal contribution to the realized block) and a validator optimal allocation (giving the validator all realized value).
\begin{proposition}\label{thm:main}
    An allocation is in the core if and only $$0\leq x_i\leq \bar{v}(\mathcal{S})-\bar{v}(\mathcal{S}\setminus\{i\})$$ for each $i\in \mathcal{S}$ and $x_V=\bar{v}(\mathcal{S})-\sum_{i\in \mathcal{S}}x_i.$
\end{proposition}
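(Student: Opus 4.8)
The plan is to prove both implications by relating the core constraints to the single family of coalitions $(\mathcal{S}\setminus\{i\})\cup\{V\}$ together with the preceding lemma, using throughout that $v(\mathcal{S}\cup V)=\bar v(\mathcal{S})$. For necessity, suppose $x$ is in the core. Non-negativity $x_i\geq 0$ is built into the definition of an allocation, and the stated formula for $x_V$ is just a rearrangement of the efficiency condition $\sum_{j\in\mathcal{S}\cup V}x_j=\bar v(\mathcal{S})$. For the upper bound I would apply the core inequality~\eqref{core_inequality} to $C=(\mathcal{S}\setminus\{i\})\cup\{V\}$, whose coalitional value is $v(C)=\bar v(\mathcal{S}\setminus\{i\})$; since efficiency gives $\sum_{j\in C}x_j=\bar v(\mathcal{S})-x_i$, this immediately yields $x_i\leq\bar v(\mathcal{S})-\bar v(\mathcal{S}\setminus\{i\})$. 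This direction uses monotonicity only implicitly and does not use submodularity.

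For sufficiency, assume the box constraints $0\leq x_i\leq\bar v(\mathcal{S})-\bar v(\mathcal{S}\setminus\{i\})$ and the formula for $x_V$. First I would check that $x$ is a legitimate allocation: the budget holds with equality by the definition of $x_V$, and $x_V\geq 0$ follows by summing the upper bounds and applying the preceding lemma with $A=\emptyset$, $B=\mathcal{S}$, together with $\bar v(\emptyset)\geq 0$, so that $\sum_{i\in\mathcal{S}}x_i\leq\bar v(\mathcal{S})-\bar v(\emptyset)\leq\bar v(\mathcal{S})$. Then I verify the core inequality for every $C\subseteq\mathcal{S}\cup V$. If $V\notin C$ then $v(C)=0\leq\sum_{j\in C}x_j$ by non-negativity. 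If $V\in C$, write $C=S\cup\{V\}$ with $S\subseteq\mathcal{S}$; efficiency rewrites $\sum_{j\in C}x_j=\bar v(\mathcal{S})-\sum_{i\in\mathcal{S}\setminus S}x_i$, so the required inequality $\sum_{j\in C}x_j\geq v(C)=\bar v(S)$ reduces to $\bar v(\mathcal{S})-\bar v(S)\geq\sum_{i\in\mathcal{S}\setminus S}x_i$; bounding each $x_i$ by $\bar v(\mathcal{S})-\bar v(\mathcal{S}\setminus\{i\})$ and then invoking the preceding lemma with $A=S$, $B=\mathcal{S}$ closes the argument.

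The computation is short once one singles out the coalitions $(\mathcal{S}\setminus\{i\})\cup\{V\}$ and $S\cup\{V\}$, so there is no real obstacle. The only steps that need care are checking $x_V\geq 0$ in the sufficiency direction — this is the one place where submodularity, via the lemma, is genuinely used — and disposing of the coalitions not containing $V$, which is immediate because all $x_j\geq 0$ while such coalitions have value $0$. It is worth emphasizing that submodularity is not needed for necessity; it is exactly what guarantees that the entire marginal-value box is contained in the core, and hence that the searcher-optimal and validator-optimal allocations described after the statement are well defined.
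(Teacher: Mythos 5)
Your proof is correct and follows essentially the same route as the paper: necessity via the core inequality for the coalitions $(\mathcal{S}\setminus\{i\})\cup\{V\}$, and sufficiency by reducing the constraint for $S\cup\{V\}$ to the lemma applied with $B=\mathcal{S}$. Your explicit check that $x_V\geq 0$ (via the lemma with $A=\emptyset$) is a small point of extra care that the paper leaves implicit, but it does not change the argument.
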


\begin{proof}
First, we show that $x_i\leq \bar{v}(\mathcal{S})-\bar{v}(\mathcal{S}\setminus\{i\})$ for each $i$. Suppose not. In that case $$x_V+\sum_{j\in \mathcal{S}\setminus\{i\}}x_j=\bar{v}(\mathcal{S})-x_i<\bar{v}(\mathcal{S}\setminus\{i\}).$$
 The previous inequality 
contradicts the core-stability of $x$ which requires inequality~(1) to hold for $C=\mathcal{S}\setminus\{i\}\cup V$. The lower bound on $x_i$ is trivial. It remains to show that all allocations with $0\leq x_i\leq \bar{v}(\mathcal{S})-\bar{v}(\mathcal{S}\setminus\{i\})$ are in the core. Let $x$ be a vector satisfying these inequalities. Let $A\subseteq\mathcal{S}$ and observe that  $$x_V+\sum_{j\in A}x_j= \bar{v}(\mathcal{S})-\sum_{i\in \mathcal{S}\setminus A}x_i\geq \bar{v}(A),$$
where the last inequality follows from Lemma 1. Thus, the core inequalities~\eqref{core_inequality} are satisfied for all coalitions $A\cup V$ with $A\subseteq\mathcal{S}$. For coalitions without the validator the core inequalities are trivially satisfied, as all searchers get non-negative value in $x$.

\end{proof}

In the example one, the prefix of the core vector $(x_{s_1},x_{s_2},x_{s_3})$ is the product of $[0,1]\times [0,1]\times[0,1]$, while the validator is paid $x_V=3-x_{s_1}-x_{s_2}-x_{s_3}$. 

In the example two, all searchers are getting, i.e., $x_i=0$ for any $i\in \{s_1,s_2,s_3\}$. The validator is paid all the value equal to $3$.

In the example three, the prefix of the core $(x_{s_1}, x_{s_2},x_{s_3})$ is the product of $[0,1]\times 0\times [0,1]$, while the validator is paid $3-x_{s_1}-x_{s_3}$.

\subsection{Implementation}
In reality, the values that searchers obtain from different blocks is private information to them. However, in a world with passive block producers\footnote{We know from~\cite{bahrani2024} that with active block producers implementation of non-trivial solutions is not possible. This in turn resembles previous results from other context where negative results prevail if incomplete information in a two-sided markets is on both sides of the market~\cite[]{Satterthwaite}. In particular, for assignment games~\cite{shapley_shubik}, which can be reinterpreted as the special case of our model where the validator has additively separable value, there is no mechanism that implements a core-allocation in dominant strategies, for any domain of valuations when there is at least one profile of valuations for which a core allocation that gives positive value to some searcher exists. For a proof of this "folk theorem" see, e.g.~\cite{Sotomayor}.}, we can implement the extreme point in the core that gives maximal value to searchers in dominant strategies:
Consider the case where the block producer is passive, that is, $v_V(B)=0$ for each block $B$.
Observe that VCG-payments in this problem are defined by
$$p_i:=\max_{B\in \mathcal{B}(\mathcal{S}\setminus\{i\})}\sum_{j\neq i}v_j(B)-\sum_{j\neq i}v_j(B^*)=\bar{v}(\mathcal{S}\setminus\{i\})-\sum_{j\neq i}v_j(B^*),$$
for each $i\in\mathcal{S}$
where $B^*$ is the welfare-optimal block that can be produced.
A straightforward calculation shows that,
$$v_i(B^*)-p_i=\sum_{j\in\mathcal{S}}v_j(B^*)-\bar{v}(B\setminus\{i\})=\bar{v}(\mathcal{S})-\bar{v}(\mathcal{S}\setminus\{i\}),$$
i.e. the searcher-optimal core outcome coincides with the VCG outcome. We obtain the following corollary of Proposition~\ref{thm:main}:
\begin{corollary}\label{implementation}
Under submodular value and with passive block-producers, the searcher-optimal core-outcome can be implemented in dominant strategies.
\end{corollary}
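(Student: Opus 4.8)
The plan is to obtain this as an essentially immediate consequence of the computation displayed just before the statement together with the classical fact that the Vickrey--Clarke--Groves mechanism is dominant-strategy incentive compatible (see \cite{AGT}). Since the block producer is passive ($v_V\equiv 0$), the relevant allocation problem is the welfare-maximization problem of choosing a block $B\in\mathcal{B}(\mathcal{S})$ so as to maximize $\sum_{i\in\mathcal{S}}v_i(B)$, and I would consider the direct-revelation mechanism in which each searcher $i$ reports a candidate valuation $\hat v_i\colon\mathcal{B}(\mathcal{S})\to\mathbb{R}_+$, the mechanism selects a reported-welfare-maximizing block $B^*$ (breaking ties by a fixed rule), and charges each $i$ the Clarke pivot payment $p_i=\max_{B\in\mathcal{B}(\mathcal{S}\setminus\{i\})}\sum_{j\neq i}\hat v_j(B)-\sum_{j\neq i}\hat v_j(B^*)$, with all payments collected by the validator. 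A first small step is to record that although $B^*$ is only pinned down up to the tie-breaking rule, the realized value allocation is not: the computation in the text gives $x_i=v_i(B^*)-p_i=\bar v(\mathcal{S})-\bar v(\mathcal{S}\setminus\{i\})$ under truthful reports, which depends only on $\bar v$, and correspondingly $x_V=\bar v(\mathcal{S})-\sum_{i\in\mathcal{S}}x_i=\sum_{i\in\mathcal{S}}p_i$.

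Next I would recall the standard argument that this mechanism is dominant-strategy incentive compatible. Fix a searcher $i$ and arbitrary, not necessarily truthful, reports $\hat v_{-i}$ of the others. If $i$ reports $\hat v_i$ and the mechanism thereby selects the block $B(\hat v_i,\hat v_{-i})$, then $i$'s true payoff is
\[
v_i\big(B(\hat v_i,\hat v_{-i})\big)-p_i=\Big(v_i\big(B(\hat v_i,\hat v_{-i})\big)+\sum_{j\neq i}\hat v_j\big(B(\hat v_i,\hat v_{-i})\big)\Big)-\max_{B\in\mathcal{B}(\mathcal{S}\setminus\{i\})}\sum_{j\neq i}\hat v_j(B).
\]
The subtracted term does not depend on $\hat v_i$, and the bracketed term is the value of the selected block under the objective $B\mapsto v_i(B)+\sum_{j\neq i}\hat v_j(B)$; since the mechanism selects a block maximizing $B\mapsto\hat v_i(B)+\sum_{j\neq i}\hat v_j(B)$, the bracketed term is maximized exactly when $\hat v_i=v_i$, so truthful reporting is a dominant strategy. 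Hence truthful reporting is a dominant-strategy equilibrium, in which the realized block is welfare-maximizing and the realized value allocation is the $x$ computed above.

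Finally I would tie this to Proposition~\ref{thm:main}: by monotonicity of $\bar v$ we have $x_i=\bar v(\mathcal{S})-\bar v(\mathcal{S}\setminus\{i\})\ge 0$, so $x$ is a genuine allocation, and since it satisfies $0\le x_i\le \bar v(\mathcal{S})-\bar v(\mathcal{S}\setminus\{i\})$ together with $x_V=\bar v(\mathcal{S})-\sum_i x_i$, Proposition~\ref{thm:main} shows $x$ lies in the core; in fact it is precisely the searcher-optimal core point identified there, giving each searcher their full marginal contribution. That proposition also shows this point is unique, so ``the'' searcher-optimal core outcome in the statement is unambiguous.

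I do not expect a genuine obstacle: the content is the classical VCG theorem plus the one-line computation already in the text. The ``hard part'' is mostly bookkeeping — specifying the message space and a tie-breaking rule so that the mechanism and its outcome are well defined, checking that with $v_V\equiv 0$ the validator's share equals exactly the sum of the Clarke payments, verifying $x_i\ge 0$ from monotonicity, and being explicit that the implementation claim concerns only the privately known valuations of the searchers, the passivity of the block producer being exactly what makes the standard VCG argument available, in contrast to the impossibility with active producers in \cite{bahrani2024}.
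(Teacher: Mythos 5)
Your proposal is correct and follows essentially the same route as the paper: identify the searcher-optimal core point with the VCG/Clarke outcome via the displayed computation, invoke the classical dominant-strategy incentive compatibility of VCG, and use Proposition~\ref{thm:main} (with submodularity and monotonicity) to confirm that this allocation lies in the core. The paper leaves the standard VCG argument and the tie-breaking/non-negativity bookkeeping implicit, whereas you spell them out, but there is no substantive difference.
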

On the other hand, it is straightforward to see that other selections from the core that not always select the VCG outcome are not dominant-strategy incentive compatible.
\begin{proposition}\label{implementation2}
Under submodular value and with passive block-producers, any core-selecting mechanism that is not always choosing the searcher-optimal outcome in the core is not dominant-strategy incentive compatible.
\end{proposition}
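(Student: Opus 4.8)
The plan is to pin down a DSIC core-selecting mechanism completely by squeezing each searcher's utility between the upper bound of Proposition~\ref{thm:main} and a matching lower bound coming from incentive compatibility. First I would record two structural consequences of core-selection that hold at \emph{every} reported profile $\theta$. It is \emph{efficient}: every core allocation distributes exactly $v(\mathcal{S}\cup V)=\bar v(\mathcal{S})$, which is the welfare of the best feasible block, so the realized block $B^*$ must be welfare-maximizing. It is \emph{individually rational} for searchers: the core inequality for the singleton coalition $\{i\}$ together with $v(\{i\})=0$ gives $x_i\ge 0$, and via the implementation in the Remark $x_i=v_i(B^*)-p_i$ is exactly searcher $i$'s net utility.

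Next I would invoke the characterization of efficient dominant-strategy mechanisms (Groves/Holmström; see~\cite{AGT}): on a sufficiently rich (connected) valuation domain, an efficient DSIC mechanism has payments of the pivot form $p_i(\theta)=h_i(\theta_{-i})-\sum_{j\ne i}v_j(B^*(\theta))$ with $h_i$ depending only on the other searchers' reports. With passive block-producers ($v_V\equiv 0$) this makes searcher $i$'s utility equal to $v_i(B^*)-p_i=\sum_{j}v_j(B^*)-h_i(\theta_{-i})=\bar v(\mathcal{S})-h_i(\theta_{-i})$. Applying the core upper bound of Proposition~\ref{thm:main} at every profile forces $h_i(\theta_{-i})\ge\bar v(\mathcal{S}\setminus\{i\})$ (the right-hand side depends on $\theta_{-i}$ only). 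For the reverse inequality I would use individual rationality at the profiles where searcher $i$'s bundles contribute nothing: if $\theta_i^{0}$ is such a ``null'' report, then $\bar v(\mathcal{S})$ evaluated at $(\theta_i^{0},\theta_{-i})$ equals $\bar v(\mathcal{S}\setminus\{i\})$, and $x_i\ge 0$ there yields $h_i(\theta_{-i})\le\bar v(\mathcal{S}\setminus\{i\})$. Hence $h_i(\theta_{-i})=\bar v(\mathcal{S}\setminus\{i\})$ for all $\theta_{-i}$, so at every profile $x_i=\bar v(\mathcal{S})-\bar v(\mathcal{S}\setminus\{i\})$ and $x_V=\bar v(\mathcal S)-\sum_i x_i$; this is exactly the searcher-optimal core allocation of Proposition~\ref{thm:main} (equivalently, the VCG outcome of Corollary~\ref{implementation}). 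The contrapositive is the statement.

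I expect the main obstacle to be the lower bound $x_i\ge\bar v(\mathcal{S})-\bar v(\mathcal{S}\setminus\{i\})$, i.e.\ pinning $h_i$ from above. This rests on two domain conditions that should be stated explicitly: enough richness/connectedness for the Groves characterization to apply, and the availability within the domain of a report under which a searcher's bundles add no value (so the searcher can be made ``irrelevant''). An alternative that avoids the Groves machinery is a direct Ausubel--Milgrom style deviation argument: at a profile where some searcher $i$ gets strictly less than their marginal contribution, exhibit a report $\theta_i'$ that values every feasible block containing $i$'s bundle at a flat level slightly above $\bar v(\mathcal{S}\setminus\{i\})$ minus the others' realized value, so that $i$ remains in the efficient block while core-selection at the deviated profile caps $i$'s payment, making the deviation strictly profitable. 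The delicate point there is constructing such a deviation while respecting feasibility of blocks and preserving submodularity of the collective value, so I would favour the route via the Groves characterization, stating the domain assumptions explicitly.
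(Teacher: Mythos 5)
Your argument is correct in outline, but it takes a genuinely different route from the paper's. The paper proves the contrapositive by a direct, single-profile deviation argument --- essentially the Ausubel--Milgrom-style alternative you sketch and then set aside: at a profile where searcher $i$ is paid strictly more than the VCG price $\bar v(\mathcal{S}\setminus\{i\})-\sum_{j\neq i}v_j(B^*)$, searcher $i$ deviates to a report $\tilde v_i$ that agrees with $v_i$ on blocks excluding her transactions, keeps $B^*$ optimal, and flattens her reported value on $B^*$ to a level strictly between the VCG price and the old payment; core-selection at the deviated profile then caps the new payment at $\tilde v_i(B^*)<p_i$, so the deviation is strictly profitable. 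The paper handles your worry about feasibility and submodularity-preservation simply by asserting that this ``flatten the top'' perturbation preserves both, which is plausible but no more rigorously justified than the domain assumptions your route needs. Your preferred route --- core-selection implies efficiency, Holmstr\"om/Groves pins the payment to $h_i(\theta_{-i})-\sum_{j\neq i}v_j(B^*)$, the core upper bound of Proposition~\ref{thm:main} forces $h_i\geq\bar v(\mathcal{S}\setminus\{i\})$, and individual rationality at a null report forces the reverse --- is more systematic and delivers the stronger uniqueness statement (DSIC $+$ core-selecting $\Rightarrow$ VCG everywhere) in one stroke, but it imports two nontrivial hypotheses the paper's argument does not need: smooth connectedness of the (submodularity-constrained) valuation domain for the Groves characterization, and the existence in the domain of a report making $i$ truly irrelevant. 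The latter is more delicate than you indicate: with externalities, $v_i\equiv 0$ does not by itself give $\bar v(\mathcal{S})=\bar v(\mathcal{S}\setminus\{i\})$, since other searchers may value blocks that include $i$'s transactions; you only get $h_i(\theta_{-i})\leq\inf_{\theta_i}\bar v(\mathcal{S})(\theta_i,\theta_{-i})$, which closes the gap only if that infimum reaches $\bar v(\mathcal{S}\setminus\{i\})$. With passive proposers and no cross-searcher externalities both routes go through; the paper's is the more elementary and self-contained of the two.
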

\begin{proof}
Suppose that for reported value functions $(v_i)_{i\in\mathcal{S}}$, a block $B^*$ and payments $(p_i)_{i\in\mathcal{S}}$ are chosen by the mechanism. By core-stability and Proposition~\ref{thm:main}, we have $0\leq v_i({B}^*)-p_i\leq v(\mathcal{S})-v(\mathcal{S}\setminus \{i\})$ for searcher $i$, or equivalently $$v_{i}(B^*)\geq p_i\geq \bar{v}(\mathcal{S}\setminus\{i\})-\sum_{j\neq i}v_j(B^*)$$ Suppose for the sake of contradiction that the last inequality is strict and $i$ reports different values $\tilde{v}_i$ with $\tilde{v}_i({B})=v_i({B})$ for $B$ blocks not including transactions by $i$, and $\tilde{v}_i(B)\leq \tilde{v}_i(B^*)$ for blocks including transactions by $i$ (so that $B^*$ is still optimal) and 
$$p_i> \tilde{v}_i(B^*)>\bar{v}(\mathcal{S}\setminus i)-\sum_{j\neq i}v_j(B^*).$$

Note that this change in value preserves the submodularity of the coalitional value function that the same block $B^*$ is optimal and that the payment $\tilde{p}_i$  for searcher $i$ now satisfies
$$ p_i>\tilde{v}_i(B^*)\geq \tilde{p}_i\geq\bar{v}(\mathcal{S}\setminus i)-\sum_{j\neq i}v_j(B^*),$$
which is stricly less. Therefore, $i$ gains from misreporting.
\end{proof}

The corollary and previous proposition motivate to further study the searcher-optimal point in the core. The value that a (passive) builder/validator obtains in this point is the maximal extractable value taken into account information rents that searchers can capture. In the next sections, we study the particular case where the value of a block is derived from independent (MEV) opportunities for which different searchers compete. For that case, we derive results on when competition lets the core collapses to one point (in which the validator captures all value) and when lack of competition allows searchers to generate positive value in the searcher-optimal core allocation.
\section{Independent Bundles and Competing Searchers}
In this section, we consider the special case of additively separable value where the value of a block is the sum of values derived from the individual (bundles) of transactions. Moreover, we look at a scenario where multiple searchers may compete for the same (arbitrage) opportunities so that their submitted bundles possibly "clash" with bundles submitted by other searchers.\footnote{This matches the reality of searching where often different searchers compete in the same strategy and find conflicting bundles among which the block builder chooses the most profitable one and includes it in the block while discarding the less profitable competing bundles. Around one third of submitted bundles to Ethereum block builders "clash".\label{footnote}}
We denote opportunities by $\mathcal{A}$ and now can identify a block by a matrix $B=(B_{ij})_{i\in \mathcal{A},j\in\mathcal{S}}$ where $B_{ij}=1$ if searcher $j$'s bundle competing for opportunity $i$ is included, $B_{ij}=0$ if it is not included. We require that $\sum_{j\in\mathcal{S}}B_{ij}\leq 1$ so that multiple clashing bundles cannot be included in the same block.
We can then write searcher $j$'s value from block $B$ as
$$v_j(B)=\sum_{i\in\mathcal{A}}v_{ij}B_{ij},$$
where $v_{ij}$ is the value extracted by searcher $j$ from opportunity $i$. 
 We can add additional constraints such as a capacity constraint on the total number of bundles.
In the unconstrained case, where all blocks are feasible, we have
\begin{equation}\label{index_val}
    v(S\cup V)=\max_{B\in\mathcal{B}(S)}\sum_{i\in\mathcal{A}}\sum_{j\in\mathcal{S}}v_{ij}B_{ij}=\sum_{i\in\mathcal{A}} \max_{j\in S}v_{ij}.
\end{equation}

To get an intuition how the core looks like in this case, we start this section with few examples and observations. Previously, in Observation~\ref{obs1} we had observed that giving all value to the validator is always in the core. In the opposite direction, if there is no searcher competition, i.e. no two searchers find the same opportunity, the validator can receive no value in the core.

\begin{observation}
    There are examples of core allocations where the validator receives $0$.
\end{observation}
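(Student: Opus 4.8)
The claim is that there exist instances of the model (with additively separable, competing-searcher value) in which some core allocation gives the validator $x_V = 0$. The natural strategy is to exhibit the simplest possible such instance and verify the core constraints directly, rather than argue abstractly. I would take a single opportunity, $\mathcal{A} = \{1\}$, and a single searcher, $\mathcal{S} = \{1\}$, with $v_{11} = 1$ (any positive value works). Then $\bar v(\mathcal{S}) = v(\{1,V\}) = 1$ by equation~\eqref{index_val}, and $\bar v(\emptyset) = v(\{V\}) = 0$. The proposed allocation is $x_1 = 1$, $x_V = 0$.

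To verify this is in the core I would invoke Proposition~\ref{thm:main}: I only need $0 \le x_1 \le \bar v(\mathcal{S}) - \bar v(\mathcal{S}\setminus\{1\}) = 1 - 0 = 1$, which holds with $x_1 = 1$, and $x_V = \bar v(\mathcal{S}) - x_1 = 0$, which matches. Hence the allocation is in the core and the validator receives $0$. (Alternatively, one can check the core inequalities by hand: the only nontrivial coalitions are $\{1\}$, requiring $x_1 \ge v(\{1\}) = 0$; $\{V\}$, requiring $x_V \ge v(\{V\}) = 0$; and $\{1,V\}$, requiring $x_1 + x_V \ge 1$ — all satisfied — plus the efficiency condition $x_1 + x_V = 1$.)

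More generally, and perhaps worth stating to match the narrative in the surrounding text about "no searcher competition," I would note that whenever no two searchers compete for the same opportunity — i.e. for each $i \in \mathcal{A}$ there is at most one $j$ with $v_{ij} > 0$ — removing searcher $j$ destroys exactly the value $\sum_{i} v_{ij}$ that $j$ alone could extract, so $\bar v(\mathcal{S}) - \bar v(\mathcal{S}\setminus\{j\}) = \sum_i v_{ij} = v_j(B^*)$. Thus the searcher-optimal core allocation of Proposition~\ref{thm:main} leaves the validator with $x_V = \bar v(\mathcal{S}) - \sum_{j} (\bar v(\mathcal{S}) - \bar v(\mathcal{S}\setminus\{j\})) = 0$, giving an entire family of witnesses.

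There is essentially no obstacle here: the statement is an existence claim and the work is entirely in picking a clean example and citing Proposition~\ref{thm:main} (or doing the four-line direct check). The only thing to be careful about is making sure the chosen instance genuinely fits the additively-separable competing-searcher framework of this section — which the one-opportunity, one-searcher example trivially does — and that the monotonicity and submodularity assumptions are respected, which they are automatically for any instance built from equation~\eqref{index_val}.
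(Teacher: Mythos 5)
Your proposal is correct and takes essentially the same approach as the paper: the paper's proof exhibits exactly the family you describe in your "more generally" paragraph (each opportunity found by at most one searcher, giving each searcher the full value of their unique opportunities and the validator $0$), and your one-searcher, one-opportunity instance is just the minimal member of that family. The verification via Proposition~\ref{thm:main} (or the direct core-inequality check) matches what the paper leaves implicit.
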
  

\begin{proof}
     Suppose that for each opportunity $i\in\mathcal{A}$ there is at most one searcher submitting a bundle of positive value, $v_{ij}>0$ for at most one $j$. Then, the value allocation with $x_{j}=\sum_{i:v_{ij}>0}v_{ij}$ for each searcher $j$ and $x_{V}=0$ is in the core.
\end{proof}

Next, we give an example where the maximum payment to the validator is enforced in the core.

\begin{observation}
    There are examples where the validator gets the full value of the winning block in any core allocation.
\end{observation}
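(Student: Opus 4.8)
The plan is to read off the set of core allocations from Proposition~\ref{thm:main} and then engineer an instance where that set is a single point. By Proposition~\ref{thm:main}, in any instance the core consists exactly of the vectors with $0\le x_i\le \bar v(\mathcal S)-\bar v(\mathcal S\setminus\{i\})$ for each $i\in\mathcal S$ and $x_V=\bar v(\mathcal S)-\sum_{i\in\mathcal S}x_i$. Since all $x_i\ge 0$, the validator receives the full winning-block value $v(\mathcal S\cup V)=\bar v(\mathcal S)$ in \emph{every} core allocation if and only if $\bar v(\mathcal S)-\bar v(\mathcal S\setminus\{i\})=0$ for every searcher $i$, i.e.\ every searcher's marginal contribution to the grand coalition is zero. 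So it suffices to produce an independent-bundles instance in which each searcher is redundant.

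First I would specialize the value formula~\eqref{index_val}: in the unconstrained model $\bar v(S)=\sum_{i\in\mathcal A}\max_{j\in S}v_{ij}$, so searcher $k$'s marginal contribution is $\sum_{i\in\mathcal A}\big(\max_{j\in\mathcal S}v_{ij}-\max_{j\in\mathcal S\setminus\{k\}}v_{ij}\big)$, which vanishes for all $k$ precisely when, for every opportunity $i$, the maximal extractable value $\max_j v_{ij}$ is attained by at least two searchers. Then I would give the minimal such example: one opportunity $\mathcal A=\{1\}$, two searchers $\mathcal S=\{1,2\}$, and $v_{11}=v_{12}=1$ (with all blocks feasible). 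Here $\bar v(\{1,2\})=\bar v(\{1\})=\bar v(\{2\})=1$, so each marginal contribution is $0$, and Proposition~\ref{thm:main} forces $x_1=x_2=0$, $x_V=1$ as the unique core allocation. More symmetric examples follow by pairing searchers on each opportunity so that every opportunity's top value is a tie.

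The only point requiring care is to stay within the hypotheses of Proposition~\ref{thm:main}, i.e.\ to confirm the collective value function here is monotone and submodular; this is automatic, since each map $S\mapsto\max_{j\in S}v_{ij}$ with $v_{ij}\ge 0$ has decreasing marginal returns (for $A\subseteq B$ and $k\notin B$, $\max_{j\in A\cup\{k\}}v_{ij}-\max_{j\in A}v_{ij}=\max(v_{ik}-\max_{j\in A}v_{ij},0)\ge\max(v_{ik}-\max_{j\in B}v_{ij},0)$), and a sum of monotone submodular functions is monotone submodular. There is no substantive obstacle beyond this sanity check; the content of the statement is simply that ties in the per-opportunity extractable value wipe out every searcher's information rent.
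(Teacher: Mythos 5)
Your proposal is correct and follows essentially the same route as the paper: the paper also constructs an instance where every opportunity's highest-value bundle is submitted by at least two searchers, notes that every searcher's marginal contribution $\bar v(\mathcal S)-\bar v(\mathcal S\setminus\{j\})$ is then zero, and invokes Proposition~\ref{thm:main} to conclude the core collapses to the validator-takes-all point. Your added concrete two-searcher example and the submodularity check are harmless elaborations of the same argument.
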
 

\begin{proof}
Suppose for each opportunity $i\in \mathcal{A}$ at least two searchers submit the same highest value bundle (or no searcher finds the opportunity). Then for each searcher $j$ we have $x_{j}=v(\mathcal{S})-v(\mathcal{S}\setminus\{j\})=0$. The claim follows from Proposition~\ref{thm:main}.
\end{proof} 




For a vector of non-negative numbers $X$, let $SH(X)$ denote the second highest coordinate of it.  Then, let $M$ denote the following sum:

\begin{equation}\label{lower_bound}
  M:=\sum_{i\in \mathcal{A}}M_i:=\sum_{i\in\mathcal{A}}SH((v_{ij})_{j\in\mathcal{S}}).
\end{equation}

Additive value function as defined in~\eqref{index_val} are submodular. Thus, we obtain the following special case of Proposition~\ref{thm:main}.



\begin{corollary}\label{value}
The allocation in which the validator receives $x_V=M$ and searcher $j$ receives $$x_j=\sum_{i:j\in\text{argmax}_{j\in \mathcal{S}} v_{ij}}(v_{ij}-M_i)$$ is in the core. 
\end{corollary}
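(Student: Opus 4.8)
The plan is to derive Corollary~\ref{value} directly from Proposition~\ref{thm:main} by verifying that the proposed allocation is exactly of the form characterized there: each searcher $j$ gets a value between $0$ and their marginal contribution $\bar v(\mathcal{S})-\bar v(\mathcal{S}\setminus\{j\})$, and the validator gets the residual $\bar v(\mathcal{S})-\sum_j x_j$. First I would record that the additive value function from~\eqref{index_val} is submodular (this is stated in the excerpt just above the corollary), so Proposition~\ref{thm:main} applies.

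The main computation is to identify the marginal contribution of each searcher. Using~\eqref{index_val}, $\bar v(\mathcal{S})=\sum_{i\in\mathcal{A}}\max_{j\in\mathcal{S}}v_{ij}$ and $\bar v(\mathcal{S}\setminus\{j\})=\sum_{i\in\mathcal{A}}\max_{k\neq j}v_{ik}$, so
\begin{equation}\label{marg}
\bar v(\mathcal{S})-\bar v(\mathcal{S}\setminus\{j\})=\sum_{i\in\mathcal{A}}\Bigl(\max_{k\in\mathcal{S}}v_{ik}-\max_{k\neq j}v_{ik}\Bigr).
\end{equation}
For opportunity $i$, the bracketed term is $0$ unless $j$ is the \emph{unique} argmax, in which case it equals $v_{ij}-SH((v_{ik})_{k\in\mathcal{S}})=v_{ij}-M_i$; if $j$ is an argmax but not unique, removing $j$ leaves another searcher with value $v_{ij}$, so the term is still $0$. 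I would note that in the proposed allocation $x_j=\sum_{i:\,j\in\arg\max_k v_{ik}}(v_{ij}-M_i)$, the summands where $j$ is a non-unique argmax contribute $v_{ij}-M_i=0$ anyway (since then $M_i=v_{ij}$), so $x_j$ in fact equals the right-hand side of~\eqref{marg}. Hence $x_j$ meets the upper bound of Proposition~\ref{thm:main} with equality; non-negativity is immediate because each retained summand is $v_{ij}-SH(\cdots)\geq 0$.

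It then remains to check the validator's share. Summing, $\sum_{j\in\mathcal{S}}x_j=\sum_{i\in\mathcal{A}}\sum_{j:\,j=\arg\max}(v_{ij}-M_i)$; for each $i$ exactly one "effective" term survives (any tie contributes zero), giving $\max_j v_{ij}-M_i$, so $\sum_j x_j=\bar v(\mathcal{S})-\sum_i M_i=\bar v(\mathcal{S})-M$, and therefore $x_V=\bar v(\mathcal{S})-\sum_j x_j=M$, matching the claim. By Proposition~\ref{thm:main} the allocation is in the core. The only mildly delicate point — and the one I would be careful to spell out — is the tie-handling in the definition of $x_j$: one must argue that whether or not the indices where $j$ is a shared argmax are included is immaterial because they contribute zero, so that $x_j$ genuinely equals the marginal contribution~\eqref{marg}; everything else is routine bookkeeping with the max-of-coordinates formula.
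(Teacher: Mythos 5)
Your proof is correct and follows exactly the route the paper intends: the paper states the corollary as an immediate special case of Proposition~\ref{thm:main} (noting only that additive value functions are submodular), and your computation of the marginal contribution $\bar v(\mathcal{S})-\bar v(\mathcal{S}\setminus\{j\})=\sum_i(\max_k v_{ik}-\max_{k\neq j}v_{ik})$ together with the tie-handling is precisely the bookkeeping the paper leaves implicit. Your explicit check that tied argmax terms contribute zero, so that $x_j$ equals the marginal contribution and $\sum_j x_j=\bar v(\mathcal{S})-M$, is a welcome clarification but not a departure from the paper's argument.
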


This particular core allocation is the worst for the validator and the best for searchers and can be implemented, see Corollary~\ref{implementation}, in dominant strategies by a generalized second price auction for bundles.

\subsection{Stochastic Model}
In this section, we analyze the core when searchers find bundles with some probability and success is independent across searchers and opportunities.
The stochastic model assumes binary-valued opportunities. Note the limitation of the setup, as it overlooks the complexity of MEV dynamics, where opportunities are often correlated, and valuations can vary widely.
Let $v_{ij}$ be a binary random variable, which is $1$ with probability $p$ and $0$ with probability $1-p$.\footnote{Generalizations of the subsequent results to non unit value that can be different for different opportunities are straightforward. The only assumption needed is that searchers conditional on finding the same opportunity generate the same value from it. We could also accommodate heterogeneous value from the same opportunity as long as the noise is sufficiently bounded.} Thus, $p$ measures how easy it is to find an arbitrage (bundle). 

\begin{proposition}\label{upperbound}
Let  $n:=|\mathcal{S}|$.
If $p> \frac{2\log n}{n}$ and $m:=|A|<n$, then the validator receives the entire block value with high probability in any core allocation.
\end{proposition}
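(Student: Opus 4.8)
The goal is to show that when $p > 2\log n / n$ and $m < n$, then with high probability every opportunity $i \in \mathcal{A}$ is found by at least two searchers, which by Proposition~\ref{thm:main} forces $x_j = \bar v(\mathcal{S}) - \bar v(\mathcal{S}\setminus\{j\}) = 0$ for every searcher (removing one searcher from a fully-competed opportunity does not change its max value), hence $x_V = \bar v(\mathcal{S})$ in every core allocation. So the entire claim reduces to a union bound over opportunities.

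\textbf{Key steps.} First I would fix an opportunity $i$ and let $N_i = \sum_{j\in\mathcal{S}} v_{ij}$ be the number of searchers who find it; $N_i \sim \mathrm{Binom}(n,p)$. The bad event for opportunity $i$ is $\{N_i \le 1\}$, and we want $\Pr[N_i \le 1] = o(1/m)$, since then a union bound over the $m = |\mathcal{A}|$ opportunities gives $\Pr[\exists i : N_i \le 1] = o(1) \to 0$, after which the structural argument above finishes. Second, I would bound $\Pr[N_i \le 1] = (1-p)^n + np(1-p)^{n-1} \le (1+np)(1-p)^{n-1} \le (1+np)\, e^{-p(n-1)}$. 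Using $p > 2\log n/n$, the exponential term is roughly $e^{-2\log n} = n^{-2}$ (up to the harmless $e^{p}$ correction), while the polynomial prefactor $1+np$ is $O(n\log n)$ or so; the product is $O(\log n / n) \cdot (1+o(1))$... which is not quite small enough on its own. The fix is that we need it to beat $1/m$ with $m < n$: since $m < n$, we need $\Pr[N_i \le 1] = o(1/n)$, i.e. we need the bound to be $o(1/n)$, and $(1+np)e^{-p(n-1)}$ with $p > 2\log n/n$ gives $\approx n^{-2}\log n \cdot \text{const} = o(1/n)$. So the union bound over $m < n$ opportunities yields $\Pr[\text{some opportunity found by} \le 1 \text{ searcher}] \le m \cdot n^{-2}\log n\cdot O(1) = o(1)$, as desired.

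\textbf{Cleanup and the structural conclusion.} Conditioned on the good event that every opportunity is found by at least two searchers, I would verify the core-collapse: for any searcher $j$, $\bar v(\mathcal{S}\setminus\{j\}) = \sum_{i\in\mathcal{A}} \max_{k \ne j} v_{ik} = \sum_{i\in\mathcal{A}} \max_{k} v_{ik} = \bar v(\mathcal{S})$, because each $\max_k v_{ik} = 1$ is still attained by some $k \ne j$ (at least two searchers attain it). Hence the marginal contribution of every searcher is zero, so by Proposition~\ref{thm:main} the only core allocation has $x_j = 0$ for all $j$ and $x_V = \bar v(\mathcal{S})$. This holds on an event of probability $1 - o(1)$, which is the claimed "with high probability."

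\textbf{Main obstacle.} The only delicate point is getting the constant in the threshold right: one must check that $p > 2\log n/n$ (rather than $\log n/n$) is exactly what makes $(1+np)e^{-p(n-1)}$ small enough to survive a union bound over up to $n-1$ opportunities, i.e. that the polynomial prefactor $1+np \approx 2\log n$ together with the $e^{p}$ slack from replacing $n-1$ by $n$ is absorbed by the extra factor $n^{-1}$ coming from the "$2$" instead of "$1$" in the exponent. Everything else — the binomial tail estimate and the reduction via Proposition~\ref{thm:main} — is routine.
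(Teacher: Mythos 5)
Your proposal is correct and follows essentially the same route as the paper: bound $\Pr[Y_i\le 1]$ for the Binomial$(n,p)$ count by roughly $(1+np)e^{-p(n-1)}\approx \log n/n^2$, union-bound over the $m<n$ opportunities, and invoke Proposition~\ref{thm:main} via vanishing marginal contributions. The momentary mis-estimate of the prefactor as $O(n\log n)$ is self-corrected to the right $O(\log n)\cdot n^{-2}$ bound, and your explicit verification that $\bar v(\mathcal{S}\setminus\{j\})=\bar v(\mathcal{S})$ on the good event is, if anything, slightly more detailed than the paper's.
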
  

\begin{proof}
    We show that there are at least $2$ searchers who have positive value each arbitrage. This can be done using direct computation of probabilities and an application of the union bound inequality. 
    Consider the following sum $Y_i:=\sum_{j\in\mathcal{S}}v_{ij}$. Thus, $Y_i$ is the number of searchers that find opportunity $i$. 
    Note that $Y_i$ is a Binomial random variable with parameters $n$ and $p$, that is $Y_i\sim Bin(n,p)$.


    \begin{eqnarray}\label{bounding_yi}
        P[Y_i<2]&=& P[Y_i=0]+P[Y_i=1] = (1-p)^n+{n\choose 1}(1-p)^{n-1}p \nonumber\\ 
        &\leq& e^{-pn}+npe^{-p(n-1)}\leq \frac{1}{n^2}+2\log n \frac{1}{n^2}\leq \frac{2\log n}{n^2}, 
    \end{eqnarray}
where the first inequality is obtained from the well known inequality: $1-x\leq e^{-x}$ for any $x>0$. By the union bound, we have: $$P[\text{at least one } Y_i < 2]\leq mP[Y_1 < 2] \leq n \cdot \frac{\log n}{n^2} = \frac{\log n}{n}.$$ The last inequality is by~\eqref{bounding_yi}.
    For any $\varepsilon>0$ there is a $n$ large enough so that $\frac{\log n}{n} < \varepsilon$ 
    and therefore $$P[Y_i\geq 2\text{ for any } i]\geq 1-\varepsilon.$$ Applying Proposition~\ref{thm:main} shows the claim of the proposition.
\end{proof} 
On the other hand, if the probability of discovering opportunities shrinks sufficiently fast in the number of searchers, then searchers can capture value with positive non-vanishing probability:
\begin{proposition}\label{1/n}
If $p=\Theta(\frac{1}{n})$ then with positive probability that is constant in $n$ the validator receives $0$ in the searcher-optimal core allocation.  
\end{proposition}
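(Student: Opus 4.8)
The plan is to show that with probability bounded below by a constant independent of $n$, there is an opportunity $i^\star$ found by exactly one searcher, while every other opportunity is found by \emph{at most} one searcher. If that event holds, then by~\eqref{index_val} every opportunity contributes to the welfare-optimal block via its unique finder (if any), and for each searcher $j$ and opportunity $i$ we have $v_{ij}(\mathcal{S})-v_{ij}(\mathcal{S}\setminus\{j\})$ equal to $v_{ij}$ whenever $j$ is the sole finder of $i$ and $0$ otherwise; summing over $i$ gives $\bar v(\mathcal{S})-\bar v(\mathcal{S}\setminus\{j\})=\sum_{i:\,j\text{ sole finder}}v_{ij}$. Hence by Proposition~\ref{thm:main} (equivalently Corollary~\ref{value}, since all $M_i=0$ under this event) the searcher-optimal core allocation gives $x_V=M=\sum_i SH((v_{ij})_j)=0$, and since $i^\star$ is found, the block has positive value, so some searcher gets positive value. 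Thus it suffices to lower-bound the probability of the event ``no opportunity is found by two or more searchers, and at least one opportunity is found.''

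The computation is a second-moment / Poisson-approximation estimate. Write $p=c/n$ for the constant $c>0$ hidden in $\Theta(1/n)$, and recall $m<n$ is fixed or at most grows like $n$ — actually here I will just take $m$ fixed, or more carefully note that the statement only needs \emph{some} regime; I would state it for $m=O(1)$ or argue $m\le n$ suffices with a mild adjustment. First I would compute, for a single opportunity $i$, that $Y_i\sim \mathrm{Bin}(n,c/n)$ satisfies $P[Y_i\le 1]=(1-c/n)^n+c(1-c/n)^{n-1}\to (1+c)e^{-c}$ and $P[Y_i\ge 1]=1-(1-c/n)^n\to 1-e^{-c}$, both constants in $(0,1)$. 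Since the $Y_i$ are independent across the $m$ opportunities, $P[\text{all }Y_i\le 1]=\bigl(P[Y_1\le 1]\bigr)^m\to \bigl((1+c)e^{-c}\bigr)^m$, a positive constant when $m$ is a constant. Then $P[\text{all }Y_i\le 1\text{ and some }Y_i\ge 1]\ge P[\text{all }Y_i\le 1]-P[\text{all }Y_i=0]=\bigl((1+c)e^{-c}\bigr)^m-\bigl(e^{-c}\bigr)^m$, which is a positive constant since $(1+c)e^{-c}>e^{-c}$. This is the desired non-vanishing lower bound.

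The main obstacle — really the only delicate point — is the dependence of the bound on $m$. If $m$ is allowed to grow with $n$, then $\bigl((1+c)e^{-c}\bigr)^m$ decays exponentially in $m$ and the ``positive constant'' claim fails, so the proposition as I read it is implicitly about a bounded number of opportunities (or $m$ and $n$ jointly scaling so that $m$ is bounded); I would state this assumption explicitly, mirroring how Proposition~\ref{upperbound} restricts to $m<n$. A secondary subtlety is making precise ``$p=\Theta(1/n)$'': the argument uses a fixed constant $c$, but since $c_1/n\le p\le c_2/n$ gives $P[Y_1\le 1]$ and $P[Y_1\ge 1]$ each bounded away from $0$ and $1$ uniformly over the relevant range of $p$ (by monotonicity of these quantities in $p$ and continuity of the limits), the same conclusion holds; I would note this in one line rather than belaboring it. Everything else is the routine $(1-x)\le e^{-x}$ / Poisson-limit bookkeeping already used in the proof of Proposition~\ref{upperbound}.
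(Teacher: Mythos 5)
Your proof is correct and takes essentially the same route as the paper: compute the per-opportunity probability that at most one searcher (the paper uses exactly one, a slightly smaller event) finds it under $p=\Theta(1/n)$, use independence across the $m$ opportunities, and conclude $M=0$ so that Proposition~\ref{thm:main} / Corollary~\ref{value} give the validator zero in the searcher-optimal allocation. Both arguments produce a probability of order $e^{-\Theta(m)}$ and hence are ``constant in $n$'' only for bounded $m$ --- a caveat the paper leaves implicit in its $\Theta(e^{-m})$ bound and which you rightly flag; your extra requirement that some opportunity is actually found is not needed for the literal claim but rules out the degenerate zero-value block that the paper instead addresses in the remark following the proposition.
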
 

\begin{proof}
    For each opportunity $i\in \mathcal{A}$, with constant probability, there is exactly one searcher that has a positive value, i.e., $Y_i=1$. Namely, $$P[Y_i=1]={n \choose 1}(1-p)^{n-1}p\rightarrow \frac{1}{e}\Theta(1),$$ as $n\rightarrow \infty$. Then, $P[Y_i=1\text{ for any } i]\rightarrow\Theta\left(\frac{1}{e^m}\right)$ as $n\to\infty$. That is, searchers complement each other in finding different arbitrages. From Proposition~\ref{thm:main}, with probability $\Theta\left(\frac{1}{e^m}\right)$, we have $M=0$.
\end{proof} 
\begin{remark}
The previous results discuss the value distribution for scenarios where the block value is expected to be positive. If probability shrinks faster than $1/n$ as $n$ grows, then with high probability the produced block has value $0$. However, conditional on the block having positive value, all value is captured by searchers in the searcher-optimal core allocation.
\end{remark}
As remarked in footnote~\ref{footnote}, block builders observe around $\sim 1/3$ of submitted bundles clashing. We can use this number to get approximate values of the parameters in our model: we use as $n=125$ which is the number of addresses that have placed at least $2$ bundles in Ethereum blocks within the last $30$ days prior to writing this paper (excluding addresses that have landed only one bundle gives us a crude way to identify addresses that have a high chance of being the main address used by a searcher)~\footnote{According to the website \url{https://libmev.com/}.}. 
Then
$$2/3\approx P[Y_i<2]=(1-p)^{125}+125(1-p)^{124}p\Rightarrow p\approx 1\%.$$
\subsubsection*{Capacity Constraints}
The previous model can be easily adapted to the case of capacity constraints on blocks where a block can only contain up to a fixed number of transactions, denoted by $K$. Note that submodularity is maintained when adding a capacity constraint. Thus, Proposition~\ref{thm:main} naturally extends to the case of capacity constraints. For the model with independent bundles, we now can consider the value function
$$\bar{v}^K(S)=\max_{B\in\mathcal{B}(S)}\sum_{i\in\mathcal{A}}\sum_{j\in\mathcal{S}}v_{ij}B_{ij}=\max_{A'\subseteq \mathcal{A},|A'|\leq K}\sum_{i\in{A}'}\max_{j\in\mathcal{S}}v_{ij}.$$
It follows immediately that Corollary~\ref{value} holds with  $$M^K:=\max_{A'\subseteq\mathcal{A},|{A}'|\leq K}\sum_{i\in{A}'}M_i$$
and $x_j^K:=\sum_{i:j\in\mathcal{A}'\cap\text{argmax}_jv_{ij}}(v_{ij}-M_i).$ 

Next, we consider how the bounds on searcher and validator value capture for the stochastic model change with capacity constraints on the block. We assume that only a constant fraction of possible opportunities can be accommodated. With a bound on the block size, the probability threshold above which the validator captures all value in all core allocation becomes lower, and now matches the corresponding threshold from Proposition~\ref{1/n} where the probability is positive (Proposition~\ref{1/n} still holds with a bound on the block size).

\begin{proposition}
Let  $n:=|\mathcal{S}|$. assume the block has capacity to include $(1-\alpha)m$ transactions where $m:=|\mathcal{A}|$ and $1/m<\alpha<1$ is a constant. Then, there is a decreasing function $\phi$ such that if $p=\omega (\frac{\phi(\alpha)}{n})$, the validator gets the entire block value with high probability in any core allocation.
\end{proposition}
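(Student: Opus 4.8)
The plan is to reduce the statement to a concentration argument showing that, with high probability, \emph{every opportunity that is included in the capacity-constrained welfare-maximizing block} is found by at least two searchers. If that event holds, then for every searcher $j$ and every opportunity $i$ that ends up in the winning block $B^*$, removing $j$ leaves another searcher with an equally good bundle for $i$, so $\bar v^K(\mathcal{S})=\bar v^K(\mathcal{S}\setminus\{j\})$; by the capacity-constrained version of Proposition~\ref{thm:main} this forces $x_j=0$ for all $j$, i.e. the validator captures the entire block value. So the whole argument is about the probability that no "value-1, found-by-exactly-one-searcher" opportunity is forced into the block.

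The key observation is that an opportunity found by exactly one searcher only has to be included if there are fewer than $(1-\alpha)m$ opportunities found by at least two searchers; otherwise the welfare-maximizer can fill the block entirely with "safe" (multiply-found) opportunities and discard every singly-found one, making all searcher marginals zero. So I would proceed as follows. First, let $Y_i\sim \mathrm{Bin}(n,p)$ be the number of searchers finding opportunity $i$, and let $q_2:=P[Y_i\ge 2]=1-(1-p)^n-np(1-p)^{n-1}$. Second, let $Z:=\#\{i:Y_i\ge 2\}\sim \mathrm{Bin}(m,q_2)$ (the $Y_i$ are independent across $i$). Third, I want $P[Z\ge (1-\alpha)m]\to 1$; by a Chernoff/Hoeffding bound this holds provided $q_2$ exceeds $1-\alpha$ by a constant margin, e.g. $q_2\ge 1-\alpha/2$ suffices for $P[Z<(1-\alpha)m]\le e^{-c\alpha^2 m}\to 0$ (here one uses $m\to\infty$, which follows from $\alpha>1/m$ being a fixed constant as $n$, and hence $m$, grows). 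Fourth, translate $q_2\ge 1-\alpha/2$ into a lower bound on $p$: since $(1-p)^n+np(1-p)^{n-1}\le e^{-pn}(1+pn)$, it suffices that $e^{-pn}(1+pn)\le \alpha/2$, which holds once $pn\ge \phi(\alpha)$ for a suitable decreasing function $\phi$ (one may take $\phi(\alpha)$ to be, say, $2\log(4/\alpha)$ or any explicit inverse of $t\mapsto e^{-t}(1+t)$ composed with $\alpha/2$; decreasing in $\alpha$ because the threshold on $q_2$ relaxes as $\alpha$ grows). Hence $p=\omega(\phi(\alpha)/n)$ — indeed even $p\ge \phi(\alpha)/n$ with the right constant — gives $pn\to\infty$ relative to $\phi(\alpha)$, so $q_2\to 1$, so the Chernoff bound applies and $P[Z\ge(1-\alpha)m]\to 1$.

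Finally, on the event $\{Z\ge(1-\alpha)m\}$ the capacity-$K=(1-\alpha)m$ block can be built entirely from opportunities with $Y_i\ge 2$, so in the associated welfare-maximizing block no searcher has positive marginal value; combining with the capacity-constrained Proposition~\ref{thm:main} (noting $M^K$ is computed over such a block and every $M_i$ appearing equals the max value, so $v_{ij}-M_i=0$) yields $x_V=\bar v^K(\mathcal{S})$ and $x_j=0$ for all $j$, with probability $\to 1$. This is exactly the claim.

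\textbf{Main obstacle.} The one genuine subtlety is the reduction step: I must argue that when there are at least $(1-\alpha)m$ multiply-found opportunities, \emph{some} welfare-maximizing capacity-constrained block uses only multiply-found opportunities, and that this is enough to kill all searcher marginals via Proposition~\ref{thm:main}. Since all positive values equal $1$, any set of $K$ multiply-found opportunities is welfare-optimal (value $K$, and one cannot do better as $|A'|\le K$), and for that block $\bar v^K(\mathcal{S}\setminus\{j\})=\bar v^K(\mathcal{S})=K$ for every $j$, giving $x_j\le \bar v^K(\mathcal{S})-\bar v^K(\mathcal{S}\setminus\{j\})=0$. The concentration part is then routine Chernoff; the only care needed is to make the dependence of $\phi$ on $\alpha$ explicit and monotone, and to note that $\alpha>1/m$ fixed forces $m\to\infty$ so the $e^{-c\alpha^2 m}$ term vanishes.
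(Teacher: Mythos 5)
Your proposal follows essentially the same route as the paper: show that with high probability at least $(1-\alpha)m$ opportunities satisfy $Y_i\ge 2$, fill the capacity-$(1-\alpha)m$ block with those, and conclude from the capacity-constrained version of Proposition~\ref{thm:main} that every searcher's marginal contribution, and hence every $x_j$, is zero. Your handling of the reduction step is in fact more explicit than the paper's, which simply writes ``Applying Proposition~\ref{thm:main} shows the claim.'' The one substantive difference is the concentration step: the paper bounds $P[Y_i<2\text{ for at least }\alpha m\text{ indices}]\le\binom{m}{\alpha m}P[Y_1<2]^{\alpha m}\le\varepsilon^{\alpha m}\le\varepsilon$, which works for every $m>1/\alpha$, whereas you invoke a Hoeffding bound $e^{-c\alpha^2 m}$ and then claim $m\to\infty$ because ``$\alpha>1/m$ is a fixed constant as $n$ grows.'' That inference is not valid: $\alpha>1/m$ with $\alpha$ constant only forces $m>1/\alpha$, and nothing in the statement ties $m$ to $n$, so for bounded $m$ your error probability is a constant bounded away from $0$. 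The gap is easily repaired without changing your architecture: under $p=\omega(\phi(\alpha)/n)$ you have $pn\to\infty$ and hence $P[Y_1<2]\to 0$, so Markov's inequality applied to $\#\{i:Y_i<2\}$ gives $P[\#\{i:Y_i<2\}\ge\alpha m]\le P[Y_1<2]/\alpha\to 0$ uniformly in $m$; with that substitution your argument is complete and matches the paper's conclusion.
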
  

\begin{proof}
    We show that there are at least $2$ searchers who have positive value each arbitrage. 
    As in the proof of Proposition~\ref{upperbound} defining $Y_i:=\sum_{j\in\mathcal{S}}v_{ij}$, we obtain 
    \begin{eqnarray}\label{bounding_yi2}
        P[Y_i<2]\leq e^{-pn}+npe^{-p(n-1)}. 
    \end{eqnarray}
    Now consider the probability that for more than $\alpha m$ indices we have $Y_i<2$. This is bounded by
\begin{eqnarray*}
    P[Y_i < 2 \text{ for at least }\alpha m \text{ indices} i]&\leq& {m \choose \alpha m}P[Y_1 < 2]^{\alpha m} \\
    &\leq& \left(\frac{e}{\alpha}\left(e^{-pn}+pne^{-p(n-1)}\right)\right)^{\alpha m},
\end{eqnarray*}
where the last inequality uses the well-known inequality ${a \choose b}\leq\left(\frac{ae}{b}\right)^b$ and the previously obtained inequality~\eqref{bounding_yi2}.
     Choosing $\phi(\alpha)$ to satisfy
     $$(1+\phi(\alpha))e^{-\phi(\alpha)}=\frac{\alpha}{e},$$
     we have for $p=\omega(\phi(\alpha)/n)$ that for any $\epsilon>0$ there is a $n$ such that
    $$\frac{e}{\alpha}\big(e^{-pn}+pne^{-p(n-1)}\big)<\epsilon$$
    and therefore $$P[Y_i\geq 2\text{ for at least }(1-\alpha)m\text{ indices } i]\geq 1-\varepsilon^{\alpha m}>1-\varepsilon.$$ Applying Proposition~\ref{thm:main} shows the claim of the proposition.
\end{proof} 
\begin{remark}
The previous result is tight in the sense that we have constant positive value capture by searcher for $p=\Theta(1/n)$, Proposition~\ref{1/n}, still holds for the case of capacity constraints on the block.
\end{remark}
\section{Empirical validation}
Our model predicts a positive relation between the amount extracted by the proposer and the competition between searchers. To verify this theory, we can look at aggregate data on submitted bundles from the MEV-Share OFA (Order Flow Auction)\footnote{\url{https://docs.flashbots.net/flashbots-protect/mev-share}}. 
\begin{figure}
\includegraphics[scale=0.22]{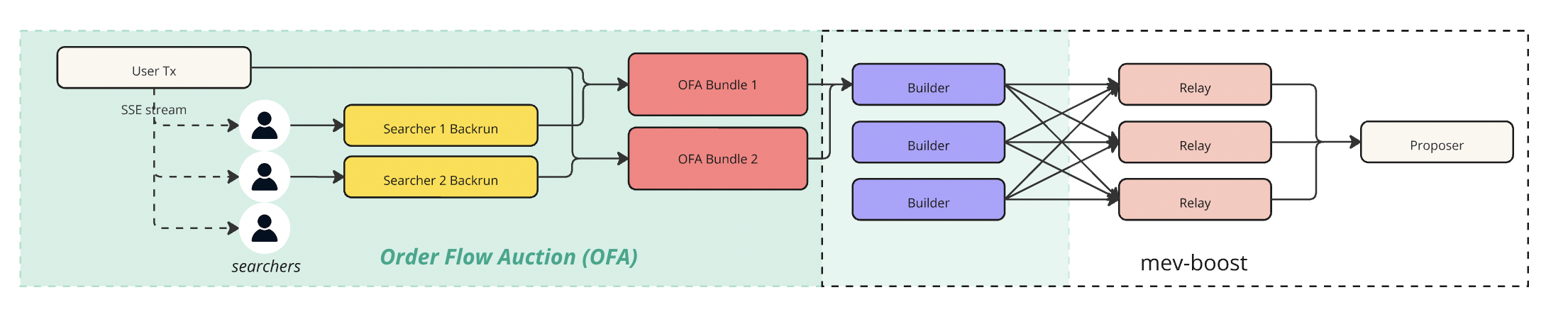}

\caption{A schematic representation of an order flow auction.}\label{OFA}
\end{figure}
As illustrated in Figure~\ref{OFA}, a general OFA exposes user transactions to a set of searchers for potential backrun opportunities, and forward the valid bundles to builders to settle, with profits shared from searchers as refunds to users and profits to the proposer. MEV-Share allows users to submit transactions to a private mempool, hiding (part of) the transaction information from searchers. Searchers can backrun user transactions submitted to the OFA based on the information given by the user, and the OFA simulates the user transaction together with the searcher backrun as a bundle, to check for feasibility. When submitting its transaction to OFA, the user can decide how much profit shared from the searcher will be refunded to itself vs paid to the proposer, with the trade-off that the higher the share it wants to keeps for itself, the less value it can provide to a builder to be considered in the next block. Based on this profit-sharing functionality, bundles created on MEV-Share are different from other bundles, in the sense that the value that is not captured by the searcher goes in different shares to the proposer \emph{and} the user who sent the target transaction rather than to the proposer alone. However, for the purpose of validating our theory, we can look at bundle profits jointly captured by the proposer and the original transaction senders and verify whether it is increasing in the number of bundles submitted by searchers with the same target user transaction.

We consider bundle data for the time period between September 2023 and February 2024. Bundles generated through MEV-Share make up a significant fraction of the bundles submitted to the Flashbots block builder. We only considered feasible submitted backrun bundles. Figure~\ref{fig:enter-label} displays the frequency of different numbers of backrun bundles submitted for the same target transactions. 
We observe that the distribution is skewed towards $0$. This is in line with our model if we choose parameters $p$ and $n$ such that the expectation of finding a successful backrun is low, $pn<1$.

\begin{figure}[t]
    \centering
    \includegraphics[scale=0.2]{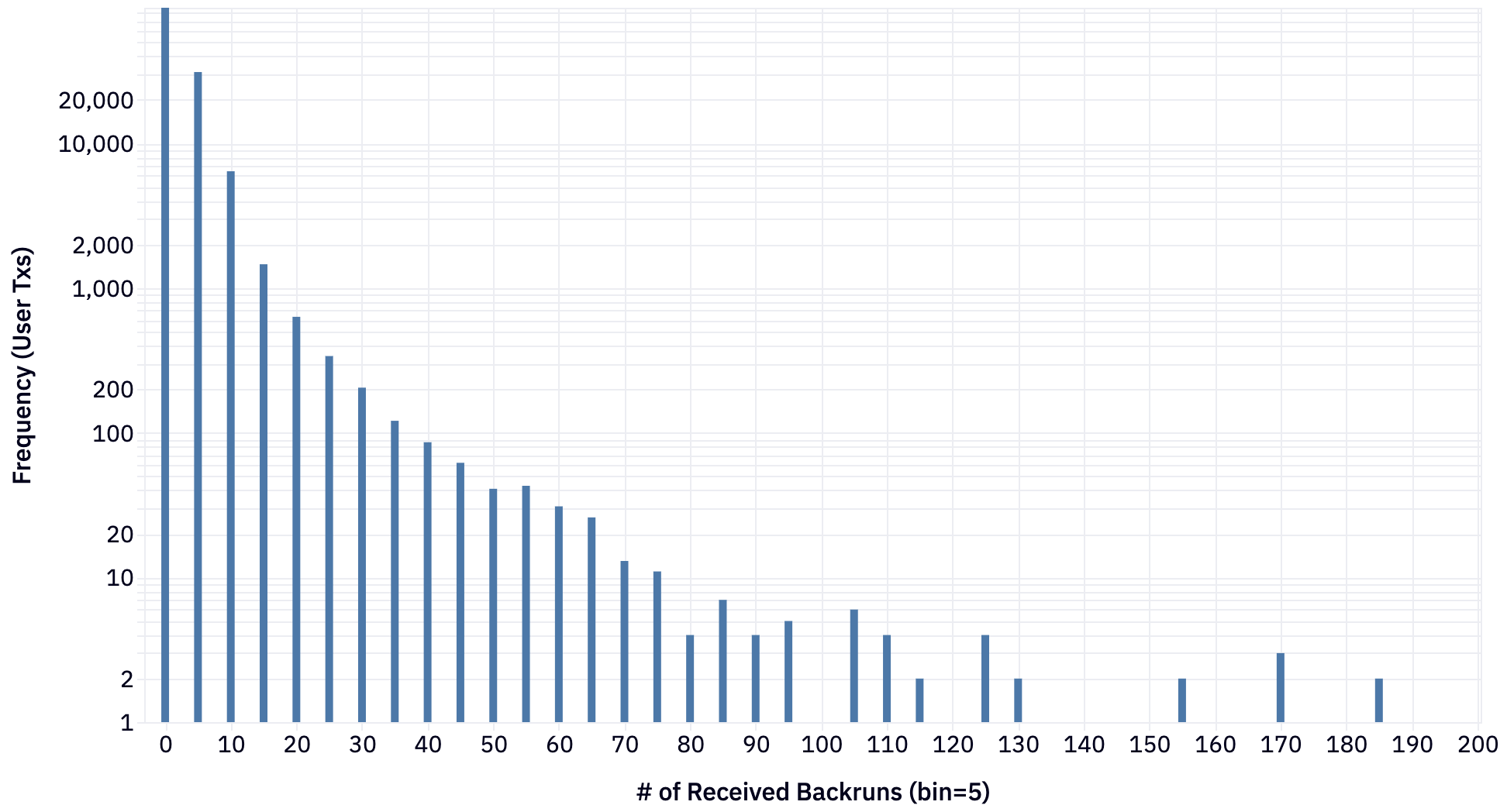}
    \caption{Number of submitted bundles for the same target transaction. Data is grouped into bins of 5, so that the first bar corresponds to the number of transactions receiving 0-4 backruns, the second to the number of transactions receiving 5-9 backruns etc.}
    \label{fig:enter-label}
\end{figure}
In Figure~\ref{fig:enter-label2}, we display the median profit for validators/users as a function of the number of submitted bundles for the same target transaction. We observe a positive relation between the number of searchers targeting the same transaction and the proposer/user profit for that transaction. This is generally in line with our theoretical predictions. We have to make one important caveat though: the value generated by the searcher is not always directly observable in our data set. This, is because the mechanism used by MEV-share is not eliciting the true value of the searchers (in contrast to the VCG mechanisms that we have analyzed above). Instead, the mechanism can be interpreted as a first-price sealed-bid auction for inclusion of the bundle. For back-runs sometimes the value accrued of the searcher can be inferred by his submitted other transactions, but not always and not perfectly. As the auction is first-price, bidding the full true value is not optimal for the searcher, and we are in a situation of partially hidden information.
There is also correlation between the value of the opportunity and the number of searchers competing for it, so that higher bids can be partially explained by higher value. Second highest values among submitted bundles is, however, close to the highest value.  Thus, the difference in profit going from 1 to 2 bidders can be explained by competition, as most value is extracted by validator/user if there are two or more bundles competing.
\begin{figure}[t]
    \centering
    \includegraphics[scale=0.2]{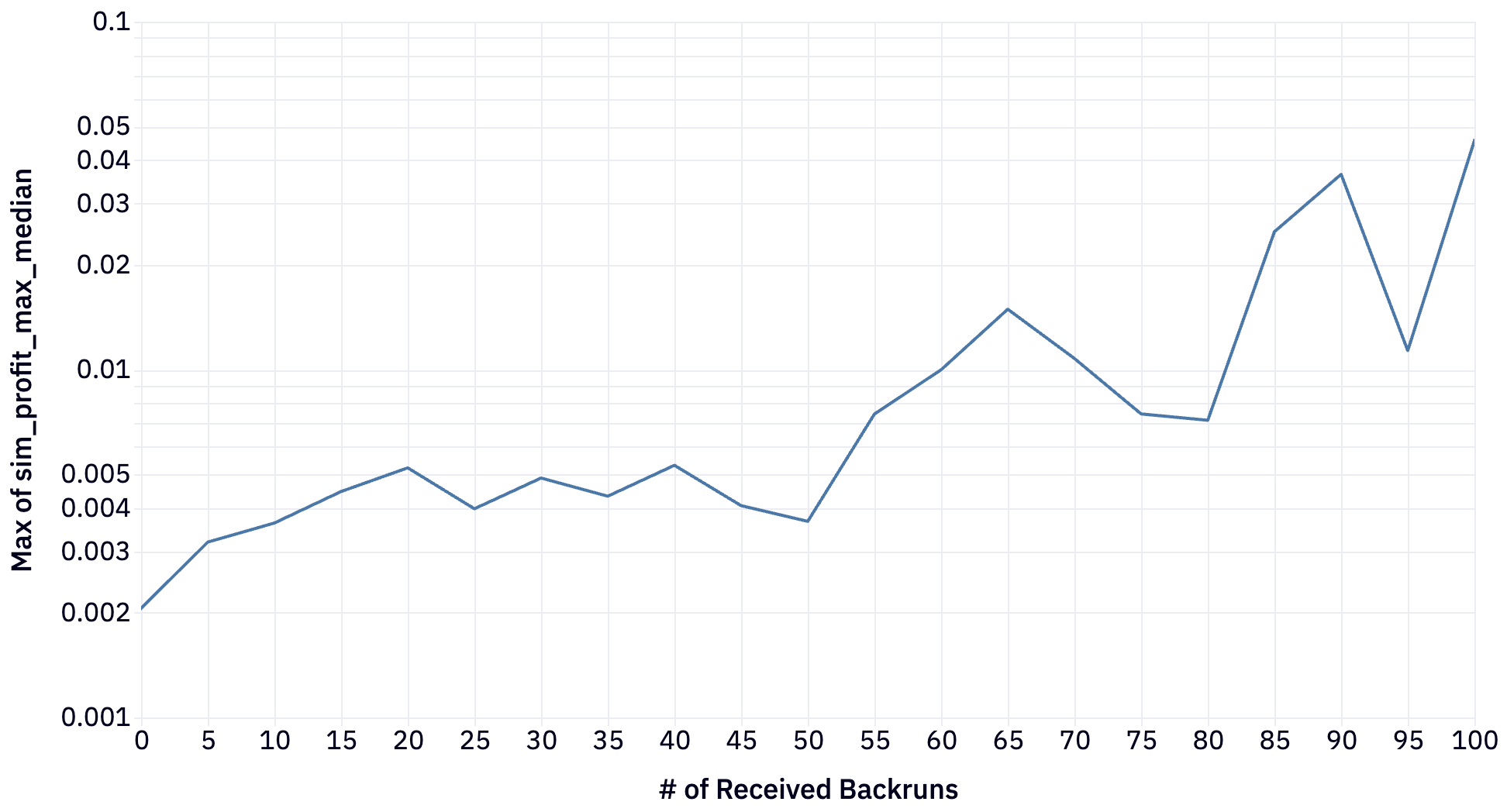}
    \caption{Median profit (log-scale) in ETH as a function of number of submitted bundles for the same target transaction.}
    \label{fig:enter-label2}
\end{figure}
Our model predicts very sharply that if values can be elicited perfectly, going from one to two competing searchers for the same target transaction, all value from that opportunity would go to the builder/searcher. As there is potentially hidden information, this sharp prediction, does not need to hold in our data however. But we would still expect the profit for the validator/users to to increase sharply going from 1 to 2 bidders, and there to be a diminishing effect of adding additional bidders. Zooming in, on the data, we indeed observe a jump at $1$, see Figure~\ref{fig:enter-label3}. Going from 1 to 2 searchers competing for the same opportunity raises the profit more than any additional searcher, in line with the theory and the previous considerations.
\begin{figure}[t]
    \centering
    \includegraphics[scale=0.18]{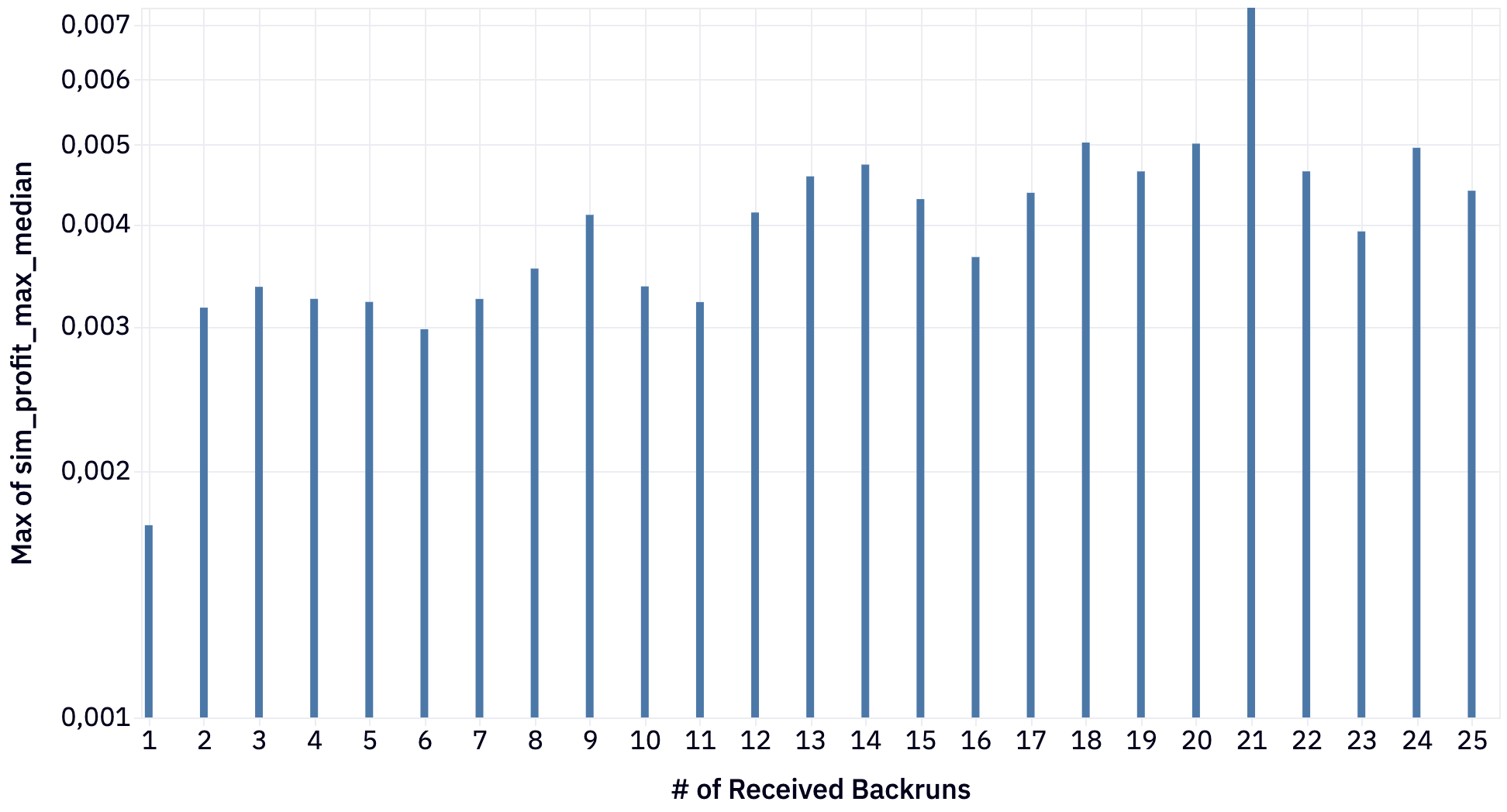}
    \caption{Median profit (log-scale) in ETH as a function of number of submitted bundles for the same target transaction .}
    \label{fig:enter-label3}
\end{figure}

Finally, we run a linear regression, see Table~\ref{regression}, of the logarithmic median profits on the number of bundles submitted for the same target transaction. Although the number of observations is not very large, the regression gives a statistically significant positive relation (triple stars mean that the probability hypothesis is wrong is less than 0.001) between profits and the number of submitted bundles. An additional searcher competing for an opportunity raises profits of validator/users by $1.1\%$ on average.
\begin{table}[h]
\caption{}
\label{regression}
\begin{center}
\begin{tabular}{ll}
\hline
                     & OLS        \\
\hline
const               & -6.087***  \\
                    & (0.107)    \\
received\_backrun   & 0.011***   \\
                    & (0.002)    \\
R-squared           & 0.340      \\
No. of observations & 110        \\
\hline
\end{tabular}
\end{center}
\end{table}

\section{Multiple Concurrent Proposers}
In this section, we will look at a setting with multiple concurrent proposers in which every next block is built from the input of designated set of proposers. Formally, similar to our approach so far, we can model it with finite set $\mathcal{V}$ of validators, instead of one validator. Let $\mathcal{V}_{-i}$ denote the set of all validators without the validator indexed $i$. We make the mild assumption that the set $\mathcal{V}_{-i}$ can build a block that includes all potential opportunities (bundles), for any $i\in \mathcal{V}$. This assumption can be easily justified if the block is large enough, compared to how much gas opportunities in that block time need, and the total number of validators, $|\mathcal{V}|$, is large enough.  Formally:

\begin{equation}
    v(S\cup \mathcal{V}_{-j}) = \max_{B \in \mathcal{B}}(\sum_{i\in S}v_i(B)+\sum_{p\in \mathcal{V}_{-j}}v_p(B)). 
\end{equation}

Moreover, we assume that proposers themselves do not derive any value from building any block, i.e., $v_p(B)=0$ for any $p\in \mathcal{V}$ and $B\in \mathcal{B}$. This is a reasonable assumption when proposers only derive value transferred to them from the searchers. 

With only this assumption, we can obtain the following negative result:

\begin{proposition}
    The core of the game is empty when $|\mathcal{S}|>1$.
\end{proposition}

\begin{proof}
    First, note that no validator can get a positive value in the core allocation. The proof is by contradiction. Suppose that the validator indexed $j$ gets a positive allocation, that is, $x_j>0$. Then, there is a deviating subset of all other validators $\mathcal{V}_{-j}$ and all searchers $\mathcal{S}$, in which they get allocated all the value, higher than the original allocation, a contradiction to the core condition that there is no deviation that has a higher total value compared to the total allocation.
\end{proof}

However, if there is only one searcher, $|\mathcal{S}|=1$, the core is non-empty.

\begin{proposition}
If $|\mathcal{S}|=1$, then the core consists of the allocation in which the unique searcher gets all the value.
\end{proposition}

\begin{proof}
    First, we show that the allocation from the proposition claim is in the core. There is no deviating subset of proposers, since they cannot create any value themselves.
    Next, note that if any proposer was getting a positive allocation, then the subset consisting of the only searcher and all the other proposers would be deviating and getting more total allocation.
\end{proof}
The emptiness of the core for the multiple concurrent builder setting points towards an inherent instability of such a setting. In the presence of multiple proposers we would expect frequent updates and cancellations of searchers' bids until the end of the block building window, as the market is out-of-equilibrium and re-contracting is always profitable for some players.

\section{Conclusion}
We have studied MEV extraction in block building as a function of searcher competition and have argued that the core, and in particular the rule that selects the searcher-optimal point within the core are suitable solution concepts that allow us to make robust prediction about value distribution in MEV extraction. We have further identified a dominant-strategy incentive compatible mechanism, giving searchers their marginal value contribution to the realized block, which would be a theoretically appealing payment mechanism for searchers in the case of passive proposers/builders. The model has been extended to multiple concurrent proposer setting, in which we have obtained that the core is empty.

The model and solution concept allowed us to make sense of stylized facts about the Ethereum block-building market: validators capture most of the value most of the time, and searchers with unique edge that are less exposed to competition are able to capture significant value. A natural extension of our model for further research would add correlation between different MEV opportunities to our stochastic model. Such an enhanced model would be particularly suitable to study the competition between different block builders and would possibly make theoretical predictions about concentration and value capture in the builder market.

\bibliographystyle{plain}
\bibliography{references}
\end{document}